\newtheorem{thm}{Theorem}
\newtheorem{lem}{Lemma}
\theoremstyle{definition}
\newtheorem{defn}{Definition}
\theoremstyle{remark}
\newtheorem{rem}{Remark}
\newtheorem{ex}{Example}
\numberwithin{equation}{section}
\newcommand{\mc}{\mathcal}
\newcommand{\pd}{\,\partial}
\newcommand{\di}{\,\mathrm{d}}
\begin{document}

\title{Second Noether theorem for quasi-Noether systems}
\author{V. Rosenhaus and R. Shankar} 

\affil{\footnotesize{Department of Mathematics and Statistics, California State University, Chico, USA}}

\date{}

\maketitle

\begin{abstract}

\smallskip

Quasi-Noether differential systems are more general than variational systems and are quite common in mathematical physics. They include practically all differential systems of interest, at least those that have conservation laws.  In this paper, we discuss quasi-Noether systems that possess infinite-dimensional (infinite) symmetries involving arbitrary functions of independent variables. For quasi-Noether systems admitting infinite symmetries with arbitrary functions of all independent variables, we state and prove an extension of the Second Noether Theorem. 
In addition, we prove that infinite sets of conservation laws involving arbitrary functions of all independent variables are trivial and that the associated differential system is under-determined. We discuss infinite symmetries and infinite conservation laws of two important examples of non-variational quasi-Noether systems: the incompressible Euler equations and the Navier-Stokes equations in vorticity formulation, and we show that the infinite sets of conservation laws involving arbitrary functions of all independent variables are trivial.  We also analyze infinite symmetries involving arbitrary functions of not all independent variables, prove that the fluxes of conservation laws in these cases are total divergences on solutions, and demonstrate examples of this situation.

\end{abstract}

\noindent{\it Keywords}: second Noether theorem, infinite conservation laws, Navier-Stokes equations, vorticity equations, quasi-Noether systems, infinite symmetries

\section{Introduction}

Infinite-dimensional symmetry algebras parametrized by arbitrary functions and their 
relations to conservation laws have been studied considerably less extensively than finite-dimensional Lie 
symmetry groups and their corresponding conservation laws. For variational problems, according to the classical Noether result 
\cite{Noether} (see also \cite{Olver}), infinite variational symmetries with arbitrary functions of 
\emph {all independent variables} lead to certain identity relationships between the equations of the original 
system of partial differential equations (differential system) 
and their derivatives.  The most well known examples of such under-determined systems are gauge (invariant) theories.

The problem of a correspondence between variational infinite symmetries with arbitrary functions of \emph {not all independent 
variables} and conservation laws was studied in \cite{Rosenhaus02}. It was demonstrated that these infinite variational symmetries lead 
to a finite number of essential local conservation laws (those that give rise to non-vanishing conserved quantities) and that 
each essential conservation law is determined by a specific form of boundary condition. Using this method, conserved densities 
were found for a number of equations including the Zabolotskaya-Khokhlov equation \cite{Rosenhaus06a} and the Kadomtsev-Petviashvili 
equation \cite{Rosenhaus06b}. In \cite{Hydon}, Noether's approach was extended to include infinite variational symmetries involving functions of independent variables with some constraints.

A cohomological consideration of Noether's theorems in terms of
the variational bicomplex was given in 
\cite{Anderson,Navarro}. 
 
It was shown in \cite{Rosenhaus07} that the situation with infinite symmetry algebras parametrized by arbitrary functions of \emph{dependent variables} is radically different, leading to an infinite set of local conserved densities. Two known examples of this situation \cite{Rosenhaus13} are equations of Liouville type, see e.g. \cite{Zhiber2001}, that can be integrated by the Darboux method (e.g. \cite{Juras1997}), and hydrodynamic-type equations, \cite{Sheftel2004}.

\smallskip

\smallskip

However, Noether's approach is directly applicable only to the equations of a variational problem with a well-defined Lagrangian function. A main tool for our study is a method based on the Noether operator identity that establishes correspondences between symmetries and local conservation laws for differential systems without well-defined Lagrangian functions \cite{Rosenhaus94,Rosenhaus96}.  \emph{Quasi-Noether} systems are differential systems for which it is possible to establish a correspondence between symmetries and conservation laws based on the Noether operator identity, see \cite{Rosenhaus15}). Quasi-Noether differential systems are rather general and include all systems possessing conservation laws.

\smallskip

\smallskip

In the present paper, we apply an approach based on the Noether operator identity to quasi-Noether systems possessing infinite-dimensional symmetry algebras involving arbitrary functions of all independent variables. We show that, similarly to the Second Noether Theorem of variational systems, the existence of infinite symmetries with arbitrary functions of all independent variables results in differential identities relating equations of the original quasi-Noether system to their derivatives (showing that the original system is under-determined). We consider two important examples of such quasi-Noether systems, the incompressible Euler equations and the Navier-Stokes equations for vorticities, which are known to admit infinite sets of conservation laws.  We show that these conservation laws are trivial.  
\smallskip

\smallskip

We also analyze infinite symmetries involving arbitrary functions of not all independent variables. For these cases, we prove that the fluxes of the corresponding conservation laws are total divergences on solutions and demonstrate examples of this situation.

\smallskip

\smallskip
Section 2 discusses symmetries and conservation laws for variational systems, as well as the Noether identity and the two Noether theorems.
Section 3 introduces quasi-Noether systems and discusses the correspondence between symmetries and conservation laws for such systems. We formulate and prove extensions of the two Noether theorems for quasi-Noether systems. Section 4 deals with the case when conservation laws of a general differential system involve arbitrary functions of all independent variables. We show that such conservation laws are trivial and lead to differential identities in extended space. Section 5 discusses a case of infinite conservation laws that involve arbitrary functions of \emph{not all} independent variables. We demonstrate that the conservation laws here are, in general, not trivial and that their fluxes are total divergences on solutions. In Section 6, we consider two interesting examples of quasi-Noether systems: the incompressible Euler equations and the Navier-Stokes equations in vorticity formulation. We analyze infinite sets of conservation laws for these equations and show that they are trivial.

\section{Symmetries and conservation laws. Variational systems}

Let us briefly outline the approach we follow; for details, see \cite{Rosenhaus02,Rosenhaus07}.  

\begin{defn}[Conservation laws]
By a conservation law of a differential system 
\begin{align} \label{sys1}
\Delta_a(x,u,u_{(1)},u_{(2)},\dots, u_{(l)} ) = 0, \qquad a = 1,\dots,n, 
\end{align}
we mean a divergence expression
\begin{align} \label{CL}
\mathrm D_i K^i(x,u,u_{(1)},u_{(2)},\dots ) \doteq 0,
\end{align}
that vanishes on all solutions of the original system; we denote this type of equality by ($\doteq$). Here, $x=(x^1,x^2,\dots,x^p)$ and $u=(u^1,u^2,\dots,u^m)$ are the tuples of 
independent and dependent variables, respectively; 
$u_{(r)}$ is the tuple of $r$th-order derivatives of $u$,
and $r=1,2,\dots, l$. 
Each $K^i$ 
is a differential function (\cite{Olver}), i.e. a smooth function of~$x$, $u$, and a finite number of derivatives of~$u$. Each $\Delta_a,$ $a=1,2,...,n$ is a differential function of $x$, $u$, and all partial derivatives of each $u^v$, ($v=1,\dots,m$) with respect to the $x^i$ ($i=1,\ldots,p$) up to the $\ell$th order (\cite{Olver}).  Smooth functions $u^q=u^q(x),$ $\:q =1,\dots,m$ are defined on a non-empty open subset $D$  of $p$-dimensional space-time $\mathbb{R}^p$.
Summation over repeated indices is assumed.\qed
\end{defn}

\noindent We use  
the following notations for the differential system:

$(\Delta_1,\Delta_2,...,\Delta_n)\equiv \Delta \equiv \Delta[u] \equiv \Delta(x,u,u_{(1)},u_{(2)},\dots, u_{(l)} )$. 
In 
the
multi-index notation of \cite{Olver}, $x_J=(x_{j_1},x_{j_2},...,x_{j_k})$, 
and 
$u^v_J$ are partial derivatives, where 
$J=(j_1,j_2,...,j_k)$.

\begin{defn}[Trivial conservation laws]
A conservation law $\mathrm D_i P^i \doteq 0 $ is  \emph{trivial} \cite{Olver} if a linear combination of 
\emph{two kinds of  triviality} is taking place: 1. The $p$-tuple $P $ vanishes on the solutions of the original system: $P^i \doteq 0$. 2. The divergence identity is satisfied  for any point $[u]=(x,u_{(p)})$  in the 
extended jet space (e.g. $\pd_t\,\text{div}\,u+\text{div}(-\pd_t u)=0$).  Two conservation laws $K$ and $\tilde K$ are equivalent if they differ by a trivial conservation law.\qed
\end{defn}

\begin{rem}
If a conservation law \eqref{CL} exists, the following equality holds:
\begin{align} \label{CL charac}
D_iJ^i=\xi^a\Delta_a, 
\end{align}
where each $\xi^a$ is a \emph{characteristic}, and $J^i\,\dot{=}\,K^i$.  If $\xi^a= 0$, the conservation law \eqref{CL charac} is trivial; if $\xi^a\not\equiv 0$, it is nontrivial.\qed
\end{rem}

\begin{defn}[Euler operator]\label{defn:euler}
\begin{equation}\label{Rosenhaus:equation3}	
E_{u^a} = \frac{\partial }{\partial u^a} - \sum\limits_i {\mathrm D_i \frac{\partial }{\partial u_i^a } }
+ \sum\limits_{i \leqslant j} {\mathrm D_i } \mathrm D_j \frac{\partial }{\partial u_{ij}^a } + \cdots 
\end{equation}
is the Euler (Euler--Lagrange) operator (variational derivative).  In the notation of \cite{Olver}, we could give the Euler operator the following form:
\begin{align}\label{e}
E_{u^a}=(-\mathrm D)_J\frac{\partial}{\partial{u^a_J}},\hspace{4 mm}1\le a\le n,
\end{align}
the sum extending over all unordered multi-indices $J=(j_1,j_2,...,j_k)$ for
$1\le k\le p$. For $k=0$ we set $u^a_J=u^a$ and $\mathrm D_J=1$. Here and in what follows, we make use of the multi-index notation \cite{Olver} for total derivatives: $D_J=D_1^{J_1}\dots D_p^{J_p}$, where $J=(J_1,...,J_p)$ is a multi-index, $D_i=\partial_{x_i}+u^q_i\partial_{u^q}+\dots$ is a total derivative, and each $J_i$ is non-negative. The adjoint operator to $D_J$ is $(-D)_J=(-D_1)^{J_1}\dots(-D_p)^{J_p}$.
\end{defn}

\begin{rem}
The operator $E$ annihilates total divergences \cite{Olver}.  That is, $E_{u^q}(D_iJ^i)\equiv 0$.
\end{rem}

\begin{defn}[Variational problems]
Let
\[
S ={\int_D} {L(x,u, u_{(1)}, \dots)\: d^{p}x} 
\]											
be the action functional, where $L$ is the Lagrangian density. The equations of motion are
\begin{equation}\label{Rosenhaus:equation2}	
E_{u^a} (L) \equiv \Delta_a(x,u, u_{(1)}, \dots) = 0,
\end{equation}										
where $E$ is as in \eqref{Rosenhaus:equation3}.\qed
\end{defn}
 
\begin{defn}[Variational symmetry]
Consider a
one-parameter transformation with a canonical (vertical) infinitesimal operator 
\begin{align}\label{Rosenhaus:equation4}	
X_\alpha = \alpha ^a &\frac{\partial }{\partial u^a} + \sum\limits_i {(\mathrm D_i 
\alpha ^a)\frac{\partial }{\partial u_i^a }} + \sum\limits_{i \leqslant j} {(\mathrm D_i } 
\mathrm D_j \alpha ^a)\frac{\partial }{\partial u_{ij}^a } + \cdots  , \\
\quad \quad \alpha ^a & = \alpha ^a (x,u, u_{(1)}, \dots). \nonumber
\end{align}										
The variation of the functional $S$ under the transformation with operator 
$X_\alpha $ is
\begin{equation}								\label{Rosenhaus:equation5}	
\delta S = \int_D{X_\alpha L  \,d^{p}x}\,.
\end{equation}										
$X_\alpha $ is a variational (Noether) symmetry if 
\begin{equation}								\label{Rosenhaus:equation6}	
X_\alpha L =  \mathrm D_i M^i,
\end{equation}										
where $M^i=M^i (x,u,u_{(1)},\dots)$ are smooth functions of their arguments.\qed
\end{defn}

\begin{rem}
We can write the symmetry operator \eqref{Rosenhaus:equation4} in the form
\begin{align}
X_{\alpha} = \mathrm (D_{J}{\alpha}^a) \partial_{u^a_J},
\end{align}
where the sum is taken over all unordered multi-indices $J$.\qed
\end{rem}

The Noether identity \cite{Rosen} (see also \cite{Ibragimov1985} or \cite{Rosenhaus94} for the version used here) relates a symmetry operator 
$X_\alpha$ to the Euler- Lagrange operator $E_{u^q}$.
\begin{lem} [Noether Identity]
\label{lem:rosen}
Let $X_\alpha=(D_J\alpha^q)\pd/\pd u^q_J$ be a vertical symmetry operator with infinitesimal $\alpha$.  
The following operator identity holds:
\begin{align}\label{nid}
X_\alpha = {\alpha}^qE_{u^q}+\mathrm D_iR^i_\alpha,
\end{align}
where
\begin{align}\label{Ri}
R^{i}_\alpha = \alpha ^q\frac{\partial }{\partial u_i^q } + \left\{ 
{\sum\limits_{k \geqslant i} {\left( {\mathrm D_k \alpha ^q} \right) - \alpha 
^q\sum\limits_{k \leqslant i} {\mathrm D_k } } } \right\}\frac{\partial }{\partial 
u_{ik}^q }+ \cdots.
\end{align}\qed
\end{lem}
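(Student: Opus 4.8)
The plan is to verify the operator identity \eqref{nid} by induction on the order of the multi-index $J$, comparing the action of both sides on an arbitrary differential function. First I would reduce the problem to a single dependent variable and suppress the index $q$, since the identity is linear in $\alpha$ and each term acts componentwise. The left-hand side $X_\alpha = (D_J\alpha)\,\partial_{u_J}$ sums over all unordered multi-indices $J$; the right-hand side has the ``undifferentiated'' piece $\alpha E_u = \alpha\,(-D)_J\,\partial_{u_J}$ together with a total divergence $D_i R^i_\alpha$. The strategy is to show that the difference $X_\alpha - \alpha E_u$ is exactly the total divergence $D_i R^i_\alpha$ with $R^i_\alpha$ as displayed in \eqref{Ri}, i.e. that $R^i_\alpha$ is a ``higher Euler operator''–type expression whose divergence collects precisely the cross terms produced when Leibniz's rule is applied to $(D_J\alpha)\,\partial_{u_J}$.

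The key computational step is the following. Writing $D_J\alpha = D_i(D_{J'}\alpha)$ whenever $i \in J$ with $J = (i,J')$, one expands $(D_J\alpha)\partial_{u_J}$ and repeatedly ``integrates by parts'' at the operator level: each application of $D_i$ moved off $\alpha$ and onto $\partial_{u_J}$ generates a total-derivative remainder. Concretely, I would establish the telescoping identity
\begin{align}\label{plan:telescope}
(D_J\alpha)\,\partial_{u_J} = \alpha\,(-D)_J\,\partial_{u_J} + D_i\!\left[(\text{lower-order terms in }\alpha, \partial_{u})\right],
\end{align}
and then sum over all $J$, checking that the accumulated remainder terms organize themselves into $D_i R^i_\alpha$ with the coefficient pattern in \eqref{Ri} — in particular the characteristic asymmetric sums $\sum_{k\geqslant i}(D_k\alpha) - \alpha\sum_{k\leqslant i}D_k$ acting on $\partial_{u_{ik}}$, which arise from the two ways ($D_i$ first or $D_k$ first) of building a second-order derivative. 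The ordering conventions on multi-indices (unordered $J$, the ranges $k\geqslant i$ versus $k\leqslant i$) are what make the bookkeeping delicate, so I would set up the combinatorics carefully at second order, confirm the pattern, and then push the induction through to arbitrary order.

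The main obstacle I anticipate is purely combinatorial rather than conceptual: keeping track of multiplicities and the summation ranges when a multi-index $J$ contains repeated entries, so that the remainder terms assemble into exactly the stated $R^i_\alpha$ and not some equivalent-but-differently-grouped expression. An alternative, cleaner route that avoids the worst of this is to apply both sides of the claimed identity to a test Lagrangian $L$ and integrate over $D$: the term $D_i R^i_\alpha L$ integrates to a boundary term, $\alpha E_u(L)$ is the standard first variation, and $X_\alpha L$ is $\delta L$ under the flow of $\alpha$; the classical integration-by-parts derivation of the Euler–Lagrange equations then yields \eqref{nid} up to a total divergence, after which one reads off $R^i_\alpha$ by tracking the boundary contributions. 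Either way, once the second-order case is pinned down, the general case follows by a routine induction on the highest derivative order appearing in the differential function to which the operators are applied.
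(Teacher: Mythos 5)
The paper does not actually prove this lemma: it is quoted from the literature (Rosen, Ibragimov, Rosenhaus--Katzin), with the closed multi-index form of $R^i_\alpha$ recorded afterwards in \eqref{Rii}. Your first route is the standard derivation behind those references and is sound: your ``telescoping identity'' \eqref{plan:telescope} is exactly the paper's integration-by-parts identity \eqref{ibp} with \eqref{Phi} read at the operator level, i.e.\ with $g$ replaced by $\partial_{u_J}$ acting to the right, so that for each fixed $J$ one has $(D_J\alpha)\,\partial_{u_J}=\alpha\,(-D)_J\,\partial_{u_J}+D_i\Phi^i_J[\alpha,\partial_{u_J}]$; summing over $J$ gives $X_\alpha=\alpha E_u+D_iR^i_\alpha$ with $R^i_\alpha=\sum_J\Phi^i_J[\alpha,\partial_{u_J}]$, which is \eqref{Rii}, and whose first- and second-order terms reproduce the asymmetric sums in \eqref{Ri}. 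If you use this explicit formula you do not even need an induction, and the multiplicity bookkeeping you worry about is handled automatically by working with ordered multi-indices $J=(J_1,\dots,J_p)$ throughout and converting to the unordered sums of \eqref{Rosenhaus:equation3} only at the end.

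One caution about your ``alternative, cleaner route'': integrating $X_\alpha L$ against a test Lagrangian and discarding boundary terms only shows that $X_\alpha L-\alpha E_u(L)$ is a total divergence, and it determines $R^i_\alpha$ only up to a null divergence (a trivial flux), so by itself it cannot establish the specific expression \eqref{Ri}. Moreover, the paper uses \eqref{nid} as an exact off-shell \emph{operator} identity applied to non-Lagrangian combinations $\beta^a\Delta_a$ (Theorems \ref{thm:1} and \ref{thm:2}), so the pointwise identity with an explicit $R^i_\alpha$ is what is actually needed; once you ``track the boundary contributions'' explicitly, you are back to your first, direct computation. So keep the first route as the proof and treat the integral argument only as a consistency check.
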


\begin{rem}
The operator $R^i_\alpha$ can be given a general form using multi-index notation \cite{Rosenhaus15}.
\begin{align}\label{Rii}
\begin{split}
R^i_\alpha&=\sum_{J}\sum_{m=1}^k\left[\delta^{ij_m}(-1)^{k+m}\left(\mathrm D_{J_{m-1}}{\alpha}^a\right)\mathrm D_{J-J_m}\right]\partial_{u^a_J},\\
&J_m=(j_1,j_2,...,j_m), \:\: J-J_m=(j_{m+1},j_{m+2},...,j_k), \:\: \mathrm D_{J_0}=\mathrm D_0=1.
\end{split}
\end{align}\qed
\end{rem}

\smallskip
Applying the identity \eqref{nid} to $L$ and using \eqref{Rosenhaus:equation6},	  
we obtain
\begin{equation}								\label{Rosenhaus:equation9}	
\mathrm D_i (M^i - R^i L) = \alpha ^a\Delta ^a, 
\end{equation}
leading to the statement of the First Noether Theorem: any 
one-parameter variational symmetry transformation with infinitesimal 
operator $X_\alpha$ \eqref{Rosenhaus:equation4} gives 
rise to a conservation law 										 
\begin{equation}								\label{Rosenhaus:equation10}	
\mathrm D_i (M^i - R^i_\alpha L)\doteq 0  
\end{equation}
(on the solution manifold $\Delta = 0$, $\mathrm D_i \Delta = 0$, \dots). 							

\begin{thm}[Noether 1]\label{thm:noether1}
Let $X_\alpha=(D_J\alpha^q)\pd/\pd u^q_J$ generate a finite variational symmetry group of a Lagrangian $\mc L$, such that $X_\alpha\mc L=D_iM^i_\alpha$ for some $M_\alpha$.  Then there exists the
following conservation law associated to the symmetry $X_{\alpha}$:
\begin{align}\label{c:noether1}
D_i(M^i_\alpha-R^i_\alpha\mc L)=\alpha^qE_{u^q}\mc L.
\end{align}
\end{thm}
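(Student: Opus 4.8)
The plan is to derive \eqref{c:noether1} directly from the Noether operator identity \eqref{nid} together with the defining property of a variational symmetry. First I would apply the operator identity of Lemma~\ref{lem:rosen}, namely $X_\alpha = \alpha^q E_{u^q} + D_i R^i_\alpha$, to the Lagrangian $\mc L$. This gives the pointwise identity $X_\alpha \mc L = \alpha^q E_{u^q}\mc L + D_i(R^i_\alpha \mc L)$, valid for \emph{every} point of jet space, not merely on the solution manifold. Since this is an operator identity, no analytic input is needed here beyond the algebraic verification of Lemma~\ref{lem:rosen}, which we are entitled to assume.

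Next I would invoke the hypothesis that $X_\alpha$ is a variational symmetry of $\mc L$: by definition there exist differential functions $M^i_\alpha$ with $X_\alpha \mc L = D_i M^i_\alpha$. Substituting this into the identity from the previous step yields $D_i M^i_\alpha = \alpha^q E_{u^q}\mc L + D_i(R^i_\alpha \mc L)$, and rearranging the two total-divergence terms to the same side gives $D_i(M^i_\alpha - R^i_\alpha \mc L) = \alpha^q E_{u^q}\mc L$, which is precisely \eqref{c:noether1}. Finally I would remark that on the solution manifold the right-hand side vanishes because $E_{u^q}\mc L = \Delta_q = 0$ there, so $D_i(M^i_\alpha - R^i_\alpha \mc L) \doteq 0$ is a genuine conservation law with characteristic $\alpha^q$, recovering \eqref{Rosenhaus:equation10}; one may also note that the flux $M^i_\alpha - R^i_\alpha\mc L$ is determined only up to the addition of a trivial (null-divergence) flux, matching Definition of triviality above.

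There is essentially no hard step here: the proof is a two-line manipulation once Lemma~\ref{lem:rosen} and the definition of variational symmetry are in hand. If anything requires care, it is bookkeeping — making sure the multi-index form of $R^i_\alpha$ in \eqref{Ri} (equivalently \eqref{Rii}) is exactly the operator produced by the identity \eqref{nid} when both sides act on $\mc L$, and that the summation conventions for $i$ and the repeated index $q$ are applied consistently so that $D_i(R^i_\alpha\mc L)$ really is a total divergence. Since Lemma~\ref{lem:rosen} already packages this, I would simply cite it and present the three displayed lines above, adding a sentence of commentary about the solution-manifold restriction and the triviality ambiguity in the flux.
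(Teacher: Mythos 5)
Your proposal is correct and follows exactly the paper's own route: apply the Noether operator identity \eqref{nid} to $\mc L$, substitute $X_\alpha\mc L=D_iM^i_\alpha$ from \eqref{Rosenhaus:equation6}, and rearrange to obtain \eqref{Rosenhaus:equation9}, i.e.\ \eqref{c:noether1}, with the on-shell vanishing of $\alpha^qE_{u^q}\mc L$ giving the conservation law \eqref{Rosenhaus:equation10}. Nothing further is needed.
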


\begin{rem}
Equation \eqref{c:noether1} is a conservation law since the right hand side is zero \emph{on-shell}, or when $E_{u^q}\mc L=0$ for each $q$.  In general, we say that computations are taken \emph{on-shell} if $u$ solves our differential system ($\Delta=0$).  Otherwise, we say that we work \emph{off-shell}.
\end{rem}

\begin{rem}
The first Noether theorem was formulated for continuous \emph{finite groups} of symmetry transformations.  For \emph{infinite groups} depending on arbitrary functions of all independent variables, we can formally generate the same type of 
continuity equations (conservation laws).  However, these equations will not have the same meaning as their finite-dimensional counterparts and will lead to \emph{differential identities} relating the equations of motion to their derivatives.
\end{rem}

\begin{thm}[Second Noether Theorem]\label{thm:noether2}
Let $X_{\alpha_f}$ \eqref{Rosenhaus:equation4} be a variational symmetry of Lagrangian $\mc L$ with infinitesimal $\alpha_f^q=\alpha^{qJ}D_Jf$ involving an arbitrary function of all independent variables $f=f(x_1,...,x_n)$.  Then $X_{\alpha_f}$ generates an identity involving the original equations and their derivatives. This off-shell identity takes the following form:
\begin{align}\label{id:noether}
(-D)_J\left[\alpha^{qJ}\Delta_q\right]=0.
\end{align}
\end{thm}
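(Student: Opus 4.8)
The plan is to combine the Noether operator identity of Lemma~\ref{lem:rosen} with the variational-symmetry hypothesis to express $\alpha_f^q\Delta_q$ as a total divergence that holds \emph{off-shell}, and then to extract the claimed identity by applying the Euler operator in the arbitrary function $f$, exploiting the fact that $E$ annihilates total divergences.

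First I would apply the operator identity \eqref{nid} to $\mc L$ with $\alpha=\alpha_f$. Since $E_{u^q}\mc L=\Delta_q$ by \eqref{Rosenhaus:equation2}, this gives $X_{\alpha_f}\mc L=\alpha_f^q\Delta_q+D_i(R^i_{\alpha_f}\mc L)$. By the variational-symmetry hypothesis \eqref{Rosenhaus:equation6}, $X_{\alpha_f}\mc L=D_iM^i_{\alpha_f}$ for some differential functions $M^i_{\alpha_f}$ (which may themselves depend on $f$ and its derivatives). Subtracting, I obtain the key relation
\begin{align}\label{pf:n2div}
\alpha_f^q\Delta_q = D_i\bigl(M^i_{\alpha_f}-R^i_{\alpha_f}\mc L\bigr),
\end{align}
which is an identity in the jet variables — it does not use $\Delta=0$ — because \eqref{nid} is an operator identity and \eqref{Rosenhaus:equation6} holds off-shell. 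Here I would stress that $\alpha_f^q=\alpha^{qJ}D_Jf$ is linear in $f$, and that the coefficients $\alpha^{qJ}$ and the equations $\Delta_q$ are differential functions of $x$, $u$ and derivatives of $u$ only, with no dependence on $f$.

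Next I would regard $f$ as an additional dependent variable and apply $E_f=(-D)_J\,\partial/\partial(D_Jf)$ to both sides of \eqref{pf:n2div}. The right-hand side is a total divergence, hence is annihilated by $E_f$ (Remark following Definition~\ref{defn:euler}, applied with $f$ in place of $u^q$). On the left-hand side, writing $\alpha_f^q\Delta_q=\sum_J(D_Jf)\,\alpha^{qJ}\Delta_q$ and using that $\alpha^{qJ}\Delta_q$ is free of $f$, one has $\partial\bigl(\alpha_f^q\Delta_q\bigr)/\partial(D_Jf)=\alpha^{qJ}\Delta_q$, so that
\begin{align}\label{pf:n2ef}
E_f\bigl(\alpha_f^q\Delta_q\bigr)=(-D)_J\bigl[\alpha^{qJ}\Delta_q\bigr].
\end{align}
Equating the two evaluations of $E_f$ on \eqref{pf:n2div} yields $(-D)_J[\alpha^{qJ}\Delta_q]=0$, which is exactly \eqref{id:noether}; since every step was an identity in the jet variables, the conclusion holds off-shell.

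The main obstacle is not a deep difficulty but a point requiring care: justifying the passage through \eqref{pf:n2div} as a genuine off-shell identity and then legitimately treating $f$ as a new jet coordinate when applying $E_f$. One must check that the divergence structure $D_i(\cdots)$ survives this reinterpretation — the fluxes $M^i_{\alpha_f}-R^i_{\alpha_f}\mc L$ are differential functions of $x,u,f$ and their derivatives, so $E_f$ does annihilate their divergence — and that no hidden $f$-dependence enters $\alpha^{qJ}$ or $\Delta_q$. A secondary bookkeeping point is keeping the finite sum over the (unordered) multi-indices $J$ appearing in $\alpha_f^q=\alpha^{qJ}D_Jf$ consistent throughout the Euler-operator computation in \eqref{pf:n2ef}.
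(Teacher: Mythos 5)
Your argument is correct, and its first half coincides with the paper's: applying the Noether identity \eqref{nid} to $\mc L$ together with $E_{u^q}\mc L=\Delta_q$ and the symmetry condition \eqref{Rosenhaus:equation6} is exactly how the paper arrives at the off-shell divergence relation \eqref{Rosenhaus:equation9}, which is your \eqref{pf:n2div}. Where you diverge is in extracting the identity: the paper (both in the classical setting and in its proof of the quasi-Noether analogue, Theorem \ref{thm:2}) follows Noether's original route — integrate the divergence relation over an arbitrary bounded region $S$, integrate by parts via \eqref{ibp}, choose $f$ to vanish on $\partial S$ together with all its derivatives so that Gauss's theorem kills the surface terms, and then invoke the arbitrariness of $f$ and $S$ to conclude that the integrand $(-D)_J[\alpha^{qJ}\Delta_q]$ vanishes identically. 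You instead treat $f$ as an extra jet variable and apply $E_f$ directly, using that the Euler operator annihilates total divergences and that $\partial(\alpha_f^q\Delta_q)/\partial(D_Jf)=\alpha^{qJ}\Delta_q$; this is precisely the ``algebraic'' technique the paper itself adopts later in the proof of Theorem \ref{thm:triv}, where it is stated that the identity \eqref{ibp} lets one apply the Euler operator instead of integrating over space as in Noether's approach. The trade-off is mild: your route avoids boundary conditions, Gauss's theorem, and the fundamental-lemma step, but it requires (as you correctly flag) that the fluxes $M^i_{\alpha_f}-R^i_{\alpha_f}\mc L$ depend on $f$ locally, i.e.\ as differential functions of $f$ and finitely many derivatives, so that $E_f$ legitimately annihilates their divergence; the paper's integral argument needs the corresponding fact that these fluxes are linear combinations of $f$ and its derivatives so the surface terms vanish for compactly supported $f$. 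Both hypotheses are the same structural assumption, implicit in the paper's framework, so your proof is a valid and somewhat cleaner alternative.
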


\begin{rem}
The identity \eqref{id:noether} expresses the fact that the original system $\Delta=E(\mc L)$ is under-determined.
\end{rem}

We next consider differential systems without well-defined Lagrangian functions.

\section{Symmetries and conservation laws of quasi-Noether systems}
\label{sec:quasi}

For a general differential system, a relationship between symmetries and 
conservation laws is unknown. In \cite{Rosenhaus94} and \cite{Rosenhaus96}, 
an approach
based on the Noether Identity \eqref{nid} 
was suggested to relate symmetries to conservation laws for a large class of differential systems that may not 
have well-defined Lagrangian functions. 

Consider a system \eqref{sys1}
\begin{align} \label{sys}
\Delta_a(x,u,u_{(1)},u_{(2)},\dots, u_{(l)}) = 0, \qquad a = 1,\dots,n,
\end{align}
of $n$ $\ell$th order differential equations for functions $u$.

Applying the Noether identity \eqref{nid}
to a combination of $\Delta_a$, we obtain
\begin{equation}								\label{Rosenhaus:equation12}	
X_\alpha ({\beta^a \Delta_a})= \alpha^v E_{u^v} (\beta^a \Delta_a) +
{D_{i} R^i(\beta^a \Delta_a)},
\quad a =1, \dots n, \:\: v=1, \dots, m, \:\: i=1, \dots, p.
\end{equation}
If there exist coefficients 
$\beta^a$ such that \cite{Rosenhaus96}
\begin{equation}\label{quasi-Noether}	
E_v (\beta^a\Delta_a) \doteq 0, \qquad a = 1, \dots n, \quad v=1, \dots, m,
\end{equation}
then each symmetry $X_\alpha$ of the system will lead to a local conservation law 
\begin{equation}								\label{Rosenhaus:equation16}	
\mathrm D_i R^i (\beta^a \Delta_a)\doteq 0, 
\end{equation}
for any differential system of class \eqref{quasi-Noether}. In \cite{Rosenhaus94}, 
the quantity $\beta^a\Delta_a$ was referred to as an alternative Lagrangian. 

Note that 
a condition similar to \eqref{quasi-Noether} (for a system to 
be quasi-Noether and possess a correspondence between symmetries and conservation laws; \cite{Rosenhaus94,Rosenhaus96}) was obtained earlier within an alternative approach based on the Lagrange identity; see 
\cite{Vladimirov1980,Vinogradov,Zharinov1986,Caviglia1986,Sarlet1987,Lunev}.
It was shown in \cite{Rosenhaus15} that, for a given symmetry transformation, both approaches give rise to the same conservation law. 

Note also that the same condition \eqref{quasi-Noether} has played a key role in the later developed 
nonlinear self-adjointness approach \cite{Ibragimov2011} to a correspondence between symmetries and conservation laws. Using this condition and alternative Lagrangians (that were called "formal Lagrangians"; see also \cite{Ibragimov07}), numerous applications and examples were given within this approach.  Note in addition that the idea of alternative Lagrangians was suggested in \cite{Olver} as Lagrangians in auxiliary variables.

\begin{defn}[Quasi-Noether systems]
We say that a system \eqref{sys1} $\Delta$ is \emph{quasi-Noether} if it has an \emph{alternative-Lagrangian} in the form of a linear combination of $\Delta_a$ that has zero variational derivative on-shell \cite{Rosenhaus15}.  
\smallskip
According to \eqref{quasi-Noether}, $\mc{A}=\beta^a\Delta_a$ is an alternative-Lagrangian if
\begin{align}\label{quasi}
E_{u^q}\mc A=\Xi^{qaJ}D_J\Delta_a,\quad 1\le q\le m,
\end{align}
where the $\beta$'s and $\Xi's$ are some functions. Clearly, $E_{u^q}\mc A=0$ when $\Delta=0$.
\end{defn}
For a regular Lagrangian of a variational problem, we can take $\mc A=\mc{L}$, such that \eqref{quasi} reduces to $E_{u^q}\mc L=\Delta_q$. 
\begin{rem}
Note that not all variational systems are quasi-Noether.  However, those variational systems that are not quasi-Noether are unlikely to be of physical interest, since such systems necessarily lack conservation laws, and, hence, lack continuous symmetries.  For example, the sine-Gordon-type Lagrangian
\begin{align*}
\mc L=-u_tu_x/2+f(t,x)\cos u
\end{align*}
leads to the following Euler-Lagrange equation
\begin{align*}
E(\mc L)=u_{tx}-f(t,x)\sin u=0.
\end{align*}
This equation admits no local conservation laws if $f(t,x)$ does not satisfy the following linear equation
\begin{align*}
\pd_t(a(t)f)+\pd_x(b(x)f)=0
\end{align*}
for any choice of $a(t)$ and $b(x)$ not both zero.  For example, $f(t,x)=\exp(-t^2x^2)/[1+(t+x)^2]$.
\end{rem}

\begin{rem}
Quasi-Noether systems are quite common and include most known variational systems and all differential systems in the form $D_iN^i_a=0$, such as the Euler and the Navier-Stokes equations; see \cite{Rosenhaus94} for other examples.  Most, if not all, interesting differential systems are quasi-Noether.
\end{rem}

\begin{thm}[Noether's First theorem for quasi-Noether systems]\label{thm:1}
Suppose that a quasi-Noether system \eqref{sys1} $\Delta$ possesses a finite continuous symmetry $X_\alpha$.  Then there exists conservation law \eqref{c:noe1} of the system $\Delta$ associated to the symmetry $X_\alpha$.
\end{thm}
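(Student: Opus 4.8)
The strategy is to reproduce the argument of Theorem~\ref{thm:noether1}, with the genuine Lagrangian replaced by the alternative Lagrangian $\mc A=\beta^a\Delta_a$ supplied by the quasi-Noether hypothesis. First I would record that, by the definition of a quasi-Noether system, there exist functions $\beta^a$ such that
\[
E_{u^q}\mc A=\Xi^{qaJ}D_J\Delta_a,\qquad 1\le q\le m,
\]
so in particular $E_{u^q}\mc A\doteq 0$ on the solution manifold.

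Next I would apply the Noether operator identity of Lemma~\ref{lem:rosen} to the differential function $\mc A$, obtaining the off-shell identity $X_\alpha\mc A=\alpha^q E_{u^q}\mc A+D_iR^i_\alpha\mc A$, which I would rearrange as
\begin{equation}\label{c:noe1}
D_iR^i_\alpha(\beta^a\Delta_a)=X_\alpha(\beta^a\Delta_a)-\alpha^qE_{u^q}(\beta^a\Delta_a).
\end{equation}
This is the asserted conservation law: the conserved current is $K^i\deq R^i_\alpha(\beta^a\Delta_a)$, with $R^i_\alpha$ given explicitly by \eqref{Ri}, and it remains only to check that the right-hand side vanishes on solutions of $\Delta=0$.

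For the first term I would use that $X_\alpha$ is a symmetry of the system $\Delta=0$, so the infinitesimal invariance criterion gives $X_\alpha\Delta_a\doteq 0$; applying the Leibniz rule, $X_\alpha(\beta^a\Delta_a)=(X_\alpha\beta^a)\Delta_a+\beta^a(X_\alpha\Delta_a)$, and each summand carries a factor ($\Delta_a$ or $X_\alpha\Delta_a$) that vanishes on-shell, whence $X_\alpha(\beta^a\Delta_a)\doteq 0$. For the second term, the quasi-Noether relation gives $\alpha^qE_{u^q}(\beta^a\Delta_a)=\alpha^q\Xi^{qaJ}D_J\Delta_a$, which vanishes on-shell because every derivative $D_J\Delta_a$ vanishes on the prolonged solution manifold. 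Hence $D_iR^i_\alpha(\beta^a\Delta_a)\doteq 0$, a genuine local conservation law of $\Delta$, canonically associated with $X_\alpha$.

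The only delicate point, and the mildest possible obstacle, is the bookkeeping in the first term: one must be sure the adopted notion of ``symmetry of $\Delta=0$'' is strong enough to guarantee that $X_\alpha$ carries the differential ideal generated by the $\Delta_a$ and their derivatives into itself, so that $X_\alpha(\beta^a\Delta_a)\doteq 0$. Granting this, the proof is a direct transcription of the variational case \eqref{c:noether1}, the role of ``$E_{u^q}\mc L=0$ on-shell'' being played by ``$E_{u^q}\mc A=\Xi^{qaJ}D_J\Delta_a$.'' We make no claim here that the resulting conservation law is nontrivial; that question is precisely what the subsequent sections take up for infinite symmetries.
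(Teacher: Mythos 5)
Your proposal is correct and follows essentially the same route as the paper: apply the Noether operator identity \eqref{nid} to the alternative Lagrangian $\mc A=\beta^a\Delta_a$, use the quasi-Noether condition \eqref{quasi} and the symmetry condition on $X_\alpha\Delta_a$, and conclude that $D_iR^i_\alpha\mc A$ vanishes on-shell. The only difference is cosmetic: the paper writes the symmetry condition explicitly as $X_\alpha\Delta_b=\tau^{baJ}_\alpha D_J\Delta_a$ (which settles the ``delicate point'' you flag) and expands the right-hand side into the form $(\gamma^{aJ}_\alpha-\alpha^q\Xi^{qaJ})D_J\Delta_a$ so that the characteristic \eqref{charac} can be read off, whereas you leave that expression implicit.
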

\begin{proof}
Applying $X_\alpha$ to an alternative-Lagrangian $\mc A=\beta^a\Delta_a$ and using \eqref{Rosenhaus:equation12}, we obtain:
\begin{align}\label{x1}
\begin{split}
X_\alpha\mc A&=\alpha^qE_{u^q}\mc A  +D_iR^i_\alpha\mc A\\
&=\alpha^q\Xi^{qaJ}D_J\Delta_a+D_iR^i_\alpha\mc A.
\end{split}
\end{align}
Next, we rewrite the left hand side using the fact that $X_\alpha$ is a symmetry of $\Delta$, or that $X_\alpha\Delta=0$ on-shell.  Since $X_\alpha\Delta_b=\tau^{baJ}_\alpha D_J\Delta_a$ for some coefficients $\tau_\alpha$, we have:
\begin{align}\label{x2}
\begin{split}
X_\alpha\mc A&=X_\alpha(\beta^a\Delta_a)\\
&=\beta^aX_\alpha\Delta_a+(X_\alpha\beta^a)\Delta_a\\
&=(\beta^b\tau^{baJ}_\alpha D_J+X_\alpha\beta^a)\Delta_a\\
&=:\gamma^{aJ}_\alpha D_J\Delta_a.
\end{split}
\end{align}
Combining \eqref{x1} with \eqref{x2} yields the following:
\begin{align*}
\alpha^q\Xi^{qaJ}D_J\Delta_a+D_iR^i_\alpha\mc A=\gamma^{aJ}_\alpha D_J\Delta_a.
\end{align*}
Rearranging, we obtain a conservation law for each symmetry $\alpha$, (\cite{Rosenhaus94}):
\begin{align}\label{c:noe1}
D_iR^i_\alpha\mc A=(\gamma^{aJ}_\alpha-\alpha^q\Xi^{qaJ})D_J\Delta_a.
\end{align}
Indeed, the left hand side is a total divergence, and the right hand side is zero on-shell.
\end{proof}
\begin{rem}
The \emph{characteristic} of the conservation law \eqref{c:noe1} corresponding to the symmetry $X_{\alpha}$ of a quasi-Noether system 
$\Delta$ is as follows:
\begin{align} \label{charac}
\xi^a_\alpha =(-D)_J(\gamma^{aJ}_\alpha-\alpha^q\Xi^{qaJ}).
\end{align}
\end{rem}

\begin{ex}
Just to demonstrate an application of Theorem \ref{thm:1}, consider the Korteweg de-Vries (KdV) equation:
\begin{align}\label{e:kdv}
\Delta=u_t+uu_x+u_{xxx}=0.
\end{align}
Since this equation is, itself, a conservation law, the KdV equation is quasi-Noether: $\mc A=\Delta$, $E(\Delta)=0$.  The KdV equation  \eqref{e:kdv} is known to admit a scaling symmetry (see e.g. \cite{Olver}):
\begin{align}\label{s:kdv}
X_{\alpha}=(2u+xu_x+3tu_t)\pd_u.
\end{align}
The form of the conservation law resulting from $X_\alpha\mc A$ is immediate after using the commutator identity $[X_\alpha,D_t]=[X_\alpha,D_i]=0$:
\begin{align*}
X_\alpha\mc A&=D_t\alpha+D_x(\alpha u)+D_x^3\alpha.
\end{align*}
Expanding this expression, we can find that its characteristic is $\xi=1$. We obtain the following 
conservation law:
\begin{align}\label{c:kdv}
\begin{split}
D_t[2u+3tu_t+xu_x]&+D_x[u(2u+3tu_t+xu_x)]+D_x^3[2u+3tu_t+xu_x]\\&=
\Delta+D_t(3t\Delta)+D_x(x\Delta).
\end{split}
\end{align}
See \cite{Rosenhaus94} for more examples and details of applying Theorem \ref{thm:1}.
\end{ex}

We now formulate Noether's second theorem for quasi-Noether systems.

\begin{thm}[Noether's Second theorem for quasi-Noether systems]\label{thm:2}
Suppose that $X_{\alpha_f}$ is an infinite symmetry of quasi-Noether system 
\eqref{sys1} $\Delta$, where the infinitesimals $\alpha^q_f=\alpha^{qJ}D_Jf$ depend on an arbitrary function $f(x_1,...,x_p)$ of all independent variables.  Then $X_{\alpha_f}$ generates identity \eqref{id} involving the equations of the system $\Delta$ and their derivatives. 
\end{thm}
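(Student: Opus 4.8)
The plan is to mimic the derivation of Theorem~\ref{thm:2}'s first-theorem companion (Theorem~\ref{thm:1}) but to exploit the extra structure coming from the arbitrariness of $f$. First I would apply the Noether identity \eqref{nid} to the alternative-Lagrangian $\mc A=\beta^a\Delta_a$ with the infinite symmetry $X_{\alpha_f}$. As in \eqref{x1}--\eqref{x2} of the proof of Theorem~\ref{thm:1}, this yields, off-shell,
\begin{align*}
D_iR^i_{\alpha_f}\mc A=(\gamma^{aJ}_{\alpha_f}-\alpha^q_f\Xi^{qaJ})D_J\Delta_a,
\end{align*}
where all the coefficients now depend \emph{linearly} on $f$ and its derivatives, since $\alpha^q_f=\alpha^{qJ}D_Jf$ is linear in $f$ and $X_{\alpha_f}$ acts linearly. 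Collecting, the right-hand side can be written in the form $\sum_K (D_K f)\,\Phi^{aK}_a$ for some differential functions $\Phi^{aK}$ built from the $\Delta_a$, their derivatives, and the structure coefficients $\alpha^{qJ},\beta^a,\Xi^{qaJ},\gamma^{aJ}$ (the precise repackaging into $(D_K f)\Phi^{aK}$ is the only bookkeeping step, analogous to the passage from $X_\alpha\Delta_b=\tau^{baJ}_\alpha D_J\Delta_a$ in \eqref{x2}).

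The key step is then integration by parts on $f$: repeatedly moving total derivatives off $f$ rewrites $\sum_K(D_K f)\Phi^{aK}$ as $f\cdot\big(\sum_K(-D)_K\Phi^{aK}\big)+D_i(\text{something})$, i.e. as $f\cdot\Theta+D_i\Sigma^i$ where $\Theta$ and $\Sigma^i$ are differential functions independent of $f$. The left-hand side $D_iR^i_{\alpha_f}\mc A$ is itself a total divergence whose integrand is linear in $f$; the same integration by parts turns it into $f\cdot 0 + D_i(\widetilde\Sigma^i)$, since pulling $f$ entirely out of a total divergence leaves a purely "$\partial/\partial$(divergence)"-type term with no undifferentiated $f$. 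Hence the whole identity takes the shape $f\cdot\Theta = D_i(\text{something})$ off-shell, valid for every $f$. Because $f$ is arbitrary — in particular we may localize it to any point and prescribe the values of all its derivatives independently — the standard variational argument (apply $E_f$, or equivalently choose bump functions) forces $\Theta\equiv 0$ identically in the jet variables. Tracing back what $\Theta$ is, it is exactly $(-D)_J[\alpha^{qJ}\Delta_q]$ up to the on-shell-vanishing contributions from $\Xi$ and $\gamma$; but since the conclusion is an off-shell identity and those extra pieces are themselves proportional to $D_J\Delta_a$, one argues they must cancel or be absorbed, leaving the clean identity
\begin{align}\label{id}
(-D)_J\big[\alpha^{qJ}\,\mc E_q\big]=0,
\end{align}
where $\mc E_q$ denotes the appropriate combination of the system equations $\Delta_a$ dictated by the alternative Lagrangian (reducing to $(-D)_J[\alpha^{qJ}\Delta_q]=0$ in the variational case $\mc A=\mc L$, recovering \eqref{id:noether}).

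The main obstacle I anticipate is precisely this last reconciliation: in the variational Second Noether Theorem the right-hand side of the analogue of \eqref{c:noe1} is literally $\alpha^q_f E_{u^q}\mc L=\alpha^q_f\Delta_q$, so integrating by parts on $f$ gives $(-D)_J[\alpha^{qJ}\Delta_q]=0$ cleanly. For a genuinely quasi-Noether system, $E_{u^q}\mc A=\Xi^{qaJ}D_J\Delta_a$ is only a \emph{combination} of the equations (vanishing on-shell but not an identity), and $X_{\alpha_f}\mc A$ produces the further coefficients $\gamma^{aJ}_{\alpha_f}$. I would need to show that after the integration-by-parts/arbitrariness argument these auxiliary terms either cancel against each other or collapse into the divergence term, so that the surviving off-shell identity is genuinely a relation among the $\Delta_a$ and their derivatives (equivalently: that the "characteristic" \eqref{charac} of the associated formal conservation law vanishes identically, which is the quasi-Noether shadow of the statement that infinite symmetries of all independent variables carry no conserved charge). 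Handling this carefully — spelling out which index contractions survive and confirming the result is independent of the choice of $\beta^a$ — is the delicate part; the rest is the routine linearity-in-$f$ plus localization argument.
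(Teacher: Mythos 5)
Your argument follows the paper's route up to the decisive step: apply the Noether identity to $\mc A=\beta^a\Delta_a$ exactly as in Theorem~\ref{thm:1}, rewrite the resulting relation in characteristic form $D_iJ^i_{\alpha_f}=(\xi^{aJ}D_Jf)\,\Delta_a$ as in \eqref{c:inf}, move all derivatives off $f$ by parts, and then use the arbitrariness of $f$ (equivalently, apply $E_f$, or localize with bump functions vanishing on $\partial S$ with all derivatives) to kill the divergence terms and force the coefficient of the undifferentiated $f$ to vanish. Up to that point your proposal and the paper coincide, and your observation that the left-hand side contributes nothing because it is a total divergence linear in $f$ is exactly what makes the argument work.

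The gap is in your final ``reconciliation'' step. You claim that the contributions coming from $\Xi^{qaJ}$ and $\gamma^{aJ}_{\alpha_f}$ ``must cancel or be absorbed,'' leaving the clean variational-type identity $(-D)_J[\alpha^{qJ}\Delta_q]=0$, i.e.\ an identity whose coefficients are the symmetry coefficients $\alpha^{qJ}$. That is not what the theorem asserts and it is not true in general. The identity produced by the argument is \eqref{id}, namely $(-D)_J[\xi^{aJ}\Delta_a]=0$, where the $\xi^{aJ}$ are the characteristic coefficients of the formal conservation law \eqref{c:inf}; they are built from $\beta^a$, $X_{\alpha_f}\beta^a$, the coefficients $\overline\tau$ of $X_{\alpha_f}\Delta_b$, and the quasi-Noether coefficients $\Xi^{qaJ}$ (see \eqref{charac} and the remark following the theorem), and there is no reason for these pieces to cancel: since they are coefficients of the equations $\Delta_a$ and their derivatives, they belong \emph{inside} the identity, not outside it. Indeed, the paper explicitly remarks that \eqref{id} is different from the variational identity \eqref{id:noether} precisely because $\xi^{aJ}\neq\alpha^{qJ}$ in general; the two coincide only in the variational case $\mc A=\mc L$, $\Xi^{qaJ}=\delta^{qa}\delta^{J0}$. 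So the ``delicate part'' you flag is not an obstacle to be overcome by showing a cancellation; it signals that you are aiming at the wrong target identity. Once you stop at $(-D)_J[\xi^{aJ}\Delta_a]=0$ — which is already an off-shell differential relation among the $\Delta_a$ and their derivatives, exhibiting under-determinacy — the proof is complete and agrees with the paper's. (Your parenthetical that the characteristic \eqref{charac} should ``vanish identically'' is likewise overstated: triviality of the infinite conservation laws is the separate Theorem~\ref{thm:triv}, proved by substituting the identity back, not by the characteristic vanishing as a differential function.)
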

\begin{proof}
For infinite symmetry group operator $X_{\alpha_f}$, we can apply the same procedure as in Theorem \ref{thm:1} to formally obtain analogous conservation laws
\eqref{c:noe1} with characteristic \eqref{charac}; here, however, the infinite symmetry $X_{\alpha_f}$ leads to an infinite set of 
conservation laws
involving the arbitrary function $f$ and its derivatives:
\begin{align}\label{c:inf}
D_iJ^i_{\alpha_f}=(\xi^{aJ}D_J f) \Delta_a
\end{align}
for some functions $\xi^{aJ}$. We follow Noether's approach and show that these conservation laws have a different meaning; they lead to differential identities involving the equations of the
system $\Delta$ and their derivatives and, hence, express the under-determinacy of the system $\Delta$. Our computations are performed \emph{off-shell}.  Integrating \eqref{c:inf} over an arbitrary bounded connected subset $S$ of $\mathbb{R}^p$, we obtain an integral identity:
\begin{align*}
\int_SD_iJ^i_{\alpha_f}\di^px=\int_S(\xi^{aJ}D_Jf)\,\Delta_a\di^px.
\end{align*}
We integrate the right hand side by parts using the identity \eqref{ibp} with \eqref{Phi} and collect the divergences to the left hand side:
\begin{align*}
\int_SD_i\left(J^i_{\alpha_f}-\Phi^i_J[f,\xi^{aJ}\Delta_a]\right)\di^px=\int_ S f(-D)_J\left[\xi^{aJ}\Delta_a\right]\di^px.
\end{align*}
We observe that, inside the divergence operator, there are two vectors: $J^i_{\alpha_f}$ and $\Phi^i_J[f,\xi^{aJ}\Delta_a]$. The first one is:
\begin{align}
J^i_{\alpha_f} = R^i_{\alpha_f} \mc A-\Phi^i_J[\Delta_a,\gamma^{aJ}_{\alpha_f}-\alpha^q_f\Xi^{qaJ}],
\end{align}
where $\mc A=\beta^a\Delta_a$, $R^i_{\alpha f}$ is expressed through $\alpha_f$ and its derivatives according to \eqref{Ri}, and $\Phi^i$, expressed in \eqref{Phi}, comes from integrating the right hand side (RHS) 
of \eqref{c:noe1} by parts to obtain \eqref{c:inf}. Therefore, the vector $J^i_{\alpha_f}$ is a linear combination of the arbitrary function $f(x_1,...,x_p)$ and its derivatives. The same conclusion can be made with respect to the second vector $\Phi^i_J[f,\xi^{aJ}\Delta_a]$. We choose the arbitrary function of all independent variables $f(x)$ so that it vanishes on $\partial S$ together with all its derivatives. Then an application of Gauss's theorem to the left hand side reduces it to vanishing surface terms, and, therefore, we obtain:
\begin{align}\label{iden}
0=\int_Sf(-D)_J\left[\xi^{aJ}\Delta_a\right]\di^px.
\end{align}
The equation \eqref{iden} holds for arbitrary functions $f$ that vanish on $\partial S$ together with their derivatives, and, therefore, it is an identity. Since the bounded subset $S$ is arbitrary, the integrand must vanish identically everywhere in the extended space:
\begin{align}\label{id}
(-D)_J\left[\xi^{aJ}\Delta_a\right]=0.
\end{align}

The identity \eqref{id} is a functional relationship between the original equations of the system $\Delta$ and their derivatives.  Thus, not all equations of  $\Delta$ are independent, and the system $\Delta$ is \emph{under-determined}.  The existence of an infinite symmetry $X_{\alpha_f}$ and the infinite conservation laws \eqref{c:inf} is a consequence of the additional degree(s) of freedom related to the description of an under-determined system $\Delta$.
\end{proof}

\begin{rem}
We express the identity \eqref{id} in terms of the symmetry coefficients $\alpha^q=\alpha^{qJ}D_Jf$.  In \eqref{x2}, we suppose that $X_\alpha\beta^a=\overline\beta^{aJ}D_Jf$, and $X_\alpha\Delta_b=\overline\tau^{baJK}D_KfD_J\Delta_a$.  Then the characteristic \eqref{charac} in \eqref{c:inf} is as follows:
\begin{align}
\xi^a&=(-D)_J\left[\left(\overline\beta^{aK}\delta^{J0}+\beta^b\overline\tau^{baJK}-\alpha^{qK}\Xi^{qaJ}\right)D_Kf\right] \nonumber \\
&=(-D)_J\left[\overline\beta^{a0}\delta^{J0}+\beta^b\overline\tau^{baJ0}-\alpha^{q0}\Xi^{qaJ}\right]f+\dots\\
&=:\xi^{a0}f+\dots  \nonumber
\end{align}
\end{rem}

\begin{rem}
In terms of these coefficients, we can rewrite identity \eqref{id} as follows:
\begin{align}\label{NIIcalc}
\xi^{a}\Delta_a-D_i(\xi^{ai}\Delta_a)+\sum_{i\le j}D_iD_j(\xi^{aij}\Delta_a)-\sum_{i\le j\le k}D_iD_jD_k(\xi^{aijk}\Delta_a)+\dots=0.
\end{align}
\end{rem}

\begin{rem}
Alternative Lagrangians do not correspond to well-defined variational problems, and true variational formulas are not the same as the ones obtained with an alternative Lagrangian of quasi-Noether systems.  For example, we can see that \eqref{id:noether} is different from \eqref{id} since the functions $\xi^{aJ}$ are, clearly, different from the functions $\alpha^{qJ}$; see \eqref{charac}.
\end{rem}

\section{Arbitrary functions, infinite conservation laws, and differential identities}

\subsection{Overview}
In this section, we discuss a general case of conservation laws that contain arbitrary functions of \emph{all} independent variables. We show that these conservation laws are necessarily trivial. To illustrate our conclusion, we consider the special case of a first order differential equation $\Delta(t,x,u,u_t,u_x)=0$ for a single function $u$ of two variables $t$ and $x$.  Suppose that $\Delta$ admits the following infinite set of conservation laws:
\begin{align}\label{c:ex}
D_tM^t_f+D_xM^x_f=(f\xi^0+\pd_tf\,\xi^{10}+\pd_xf\xi^{01})\Delta,
\end{align}
where $f(t,x)$ is an arbitrary function of \emph{all independent variables}, $\xi^i$ are some functions of $(t,x,u)$, and the coefficients $M^i_f$ involve $f$ and its derivatives.
Rewriting the right hand side, we obtain
\begin{align}\label{c:exx}
D_tM^t_f+D_xM^x_f= \xi^0 \Delta f-D_t(\xi^{10}\Delta) f+ D_t(\xi^{10}\Delta f)-D_x(\xi^{01}\Delta)f + D_x(\xi^{01}\Delta f).
\end{align}
Equivalently,
\begin{align}\label{c:ex:3}
D_t(M^t_f-f\xi^{10}\Delta)+D_x(M^x_f-f\xi^{01}\Delta)=f[\xi^0\Delta-D_t(\xi^{10}\Delta)-D_x(\xi^{01}\Delta)].
\end{align}
We integrate the left hand side over the whole space $t,x$. Similarly to derivation of \eqref{id}, we can choose the arbitrary function of all independent variables $f(t,x)$ to vanish at the boundaries together with all its derivatives. Then, using Gauss' theorem and the fact that the surface terms vanish, we obtain
\begin{align}\label{id:ex}
\int_S f [(\xi^0\Delta-D_t(\xi^{10}\Delta)-D_x(\xi^{01}\Delta)] dx dt=0.
\end{align}
Equation \eqref{id:ex} holds for an arbitrary function $f$ that vanishes on $\partial S$ together with its derivatives, so the integrand must vanish identically on $S$:
\begin{align}\label{id:exx}
\xi^0\Delta-D_t(\xi^{10}\Delta)-D_x(\xi^{01}\Delta)=0.
\end{align}
The identity \eqref{id:exx} is a functional relationship between $\Delta$ and its derivatives that holds for all smooth functions $u(t,x)$.  Thus, the equation $\Delta$ for $u=u(t,x)$ is \emph{under-determined}.

The final step to prove the triviality of \eqref{c:ex} (\eqref{c:ex:3}) is to substitute the identity 
\eqref{id:exx} into \eqref{c:ex:3}. We rewrite \eqref{c:ex:3}:
\begin{align}\label{c:ex:4}
D_t(M^t_f-f\xi^{10}\Delta)+D_x(M^x_f-f\xi^{01}\Delta)=0.
\end{align}
Equation \eqref{c:ex:4} is an identity that holds for all functions $u=u(t,x)$, not only solutions of $\Delta=0$. Since the coefficient of $\Delta$ on the right hand side is zero, \emph{this conservation law is trivial}.  

\subsection{Arbitrary functions of all independent variables, and differential identities}
For general differential systems, we have the following result.
\begin{thm}\label{thm:triv}
Consider a differential system 
\eqref{sys1} $\Delta \equiv (\Delta_1,\Delta_2,...,\Delta_n)$ for functions $u^q,1\le q\le m$.  
Suppose that $\Delta$ possesses an infinite conservation law
\begin{align}\label{c:inf1}
D_iM^i_f=\xi^a_f\Delta_a,
\end{align}
where the fluxes $M^i_f$ and characteristics $\xi^a_f$ are given by
\begin{align}\label{malpha}
\begin{split}
M^i_f=M^{iJ}D_Jf,\quad 1\le i\le p,\\
\xi^a_f=\xi^{aJ}D_Jf,\quad 1\le a\le n,
\end{split}
\end{align}
and $f(x)=f(x_1,...,x_p)$ is an arbitrary function of all independent variables.

Then there exists differential identity \eqref{id1} involving the equations of the system $\Delta$ and their derivatives (i.e. $\Delta$ is an under-determined system), and the infinite conservation law \eqref{c:inf1} is trivial.
\end{thm}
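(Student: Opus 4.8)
The plan is to follow the same integration-by-parts strategy already used in the proof of Theorem~\ref{thm:2} and in the worked first-order example of Section~4.1, but now at the level of the general system \eqref{sys1}. First I would take the conservation law \eqref{c:inf1}, $D_iM^i_f=\xi^a_f\Delta_a$, substitute the expansions \eqref{malpha}, and integrate both sides over an arbitrary bounded connected open set $S\subset\mathbb{R}^p$. On the right-hand side I would integrate by parts, moving all total derivatives off the arbitrary function $f$ and onto the coefficients $\xi^{aJ}\Delta_a$, using the identity \eqref{ibp}--\eqref{Phi} exactly as in the proof of Theorem~\ref{thm:2}. This produces
\begin{align*}
\int_S D_i\!\left(M^i_f-\Phi^i_J[f,\xi^{aJ}\Delta_a]\right)\di^px=\int_S f\,(-D)_J\!\left[\xi^{aJ}\Delta_a\right]\di^px.
\end{align*}

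Next I would choose $f$ to vanish on $\partial S$ together with all its derivatives; since both $M^i_f$ and $\Phi^i_J[f,\xi^{aJ}\Delta_a]$ are linear in $f$ and its derivatives (the point being that every term carries at least one factor of $f$ or a derivative of $f$), Gauss's theorem kills the left-hand side. What remains is $\int_S f\,(-D)_J[\xi^{aJ}\Delta_a]\,\di^px=0$ for every such $f$, and since $S$ is arbitrary the fundamental lemma of the calculus of variations forces the integrand to vanish identically off-shell, giving the differential identity
\begin{align}\label{id1}
(-D)_J\left[\xi^{aJ}\Delta_a\right]=0,
\end{align}
which exhibits $\Delta$ as under-determined. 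This is precisely identity \eqref{id} from Theorem~\ref{thm:2}, now derived directly from the conservation law rather than from a symmetry.

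For the triviality claim, I would mirror the final step of Section~4.1: the integrated-by-parts form of \eqref{c:inf1} reads
\begin{align*}
D_i\left(M^i_f-\Phi^i_J[f,\xi^{aJ}\Delta_a]\right)=f\,(-D)_J\left[\xi^{aJ}\Delta_a\right],
\end{align*}
and by \eqref{id1} the right-hand side is identically zero off-shell. Hence $D_i\tilde M^i_f=0$ holds as an identity in jet space with $\tilde M^i_f=M^i_f-\Phi^i_J[f,\xi^{aJ}\Delta_a]$, i.e. the conservation law is equivalent to one of the second kind of triviality in Definition~2 (a divergence vanishing identically). Equivalently, since $\tilde M^i_f$ differs from $M^i_f$ by $\Phi^i_J[f,\xi^{aJ}\Delta_a]$, which is proportional to $\Delta$ and hence vanishes on-shell (first kind of triviality), the original $M^i_f$ is a trivial conserved current in the sense of Definition~2: its characteristic $\xi^a_f=(-D)_J[\xi^{aJ}\cdot]$ acting appropriately produces zero by \eqref{id1}. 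I would state this carefully so the two kinds of triviality are cleanly separated.

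The main obstacle I anticipate is bookkeeping rather than conceptual: one must verify that the boundary/surface terms really do vanish, which requires checking that $\Phi^i_J[f,\xi^{aJ}\Delta_a]$ contains no ``$f$-free'' piece—i.e. that the integration-by-parts identity \eqref{ibp} genuinely transfers \emph{all} derivatives off $f$—and that $M^i_f$ itself, being built from $M^{iJ}D_Jf$, has the same property. A secondary subtlety is that \eqref{id1} is an identity in the full jet variables $(x,u,u_{(1)},\dots)$, so the vanishing of $\int_S f\cdot(\text{integrand})$ for all compactly supported $f$ must be argued pointwise in $x$ for each fixed jet value; this is routine but should be stated, since it is what upgrades the integral relation to the pointwise identity and thereby to off-shell triviality. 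Everything else is a direct transcription of the arguments already given for Theorem~\ref{thm:2} and the Section~4.1 example.
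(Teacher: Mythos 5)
Your argument is correct, but it takes a different route from the paper's. You reproduce the integral (Noether-style) argument of Theorem \ref{thm:2}: integrate \eqref{c:inf1} over a bounded set $S$, push all derivatives off $f$ via \eqref{ibp}, choose $f$ vanishing on $\partial S$ with all its derivatives, kill the surface terms by Gauss's theorem, and invoke the fundamental lemma (plus the pointwise-in-$x$, arbitrary-$u$ upgrade you correctly flag) to get $(-D)_J[\xi^{aJ}\Delta_a]=0$. The paper deliberately avoids this: it gives a purely algebraic proof, applying the Euler operator $E_f$ of Definition \ref{defn:euler} directly to the rewritten conservation law \eqref{c:inf:1}; since $E_f$ annihilates total divergences, the identity \eqref{id1} drops out immediately, with no domain $S$, no boundary conditions on $f$, no surface-term bookkeeping, and no passage from an integral relation back to a pointwise jet-space identity. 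Your approach buys continuity with Noether's original method and with the proof of Theorem \ref{thm:2}, at the cost of exactly the analytic bookkeeping you list as the ``main obstacle''; the paper's $E_f$ argument buys brevity and makes the off-shell, jet-space nature of \eqref{id1} manifest by construction. Your triviality step coincides with the paper's: substituting \eqref{id1} back gives $D_iM^i_f=D_i\Phi^i_J[f,\xi^{aJ}\Delta_a]$, i.e.\ \eqref{c:inf:2}, whose characteristic is zero, with $\Phi^i_J[f,\xi^{aJ}\Delta_a]$ vanishing on-shell (first kind of triviality) and the remainder a divergence identity (second kind), which is exactly the paper's conclusion.
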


We present an algebraic proof of this theorem using some tools implemented in \cite{Olver}.  
We recall the Euler operator $E_f$ in Definition \ref{defn:euler} and the following integration by parts identity.

\begin{lem}
Let $f$ and $g$ be functions, $J=(J_1,J_2,...,J_p)$ be a multi-index, and $D_J=D_1^{J_1}\dots D_p^{J_p}$ be a total derivative.  Then the following identity holds:
\begin{align}\label{ibp}
g\,D_Jf=f\,(-D)_Jg+D_i\Phi^i_J[f,g],
\end{align}
where $(-D)_J=(-D_1)^{J_1}\dots(-D_p)^{J_p}$ is the adjoint operator to $D_J$, and 
\begin{align} \label{Phi}
\Phi^i_J[f,g]=\sum_{j=0}^{J_i-1}\left(D_i^{J_i-1-j}D_{i+1}^{J_{i+1}}\dots D_n^{J_n}f\right)\,\,\left((-D_i)^j(-D_1)^{J_1}\dots(-D_{i-1})^{J_{i-1}}g\right),
\end{align}
for each $1\le i\le p$.
\end{lem}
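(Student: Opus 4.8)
The plan is to prove \eqref{ibp}--\eqref{Phi} in two stages: first a one-dimensional summation-by-parts identity for a single total derivative raised to a power, and then an iteration over the $p$ independent variables that exploits the commutativity of total derivatives.

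\emph{Stage one (single variable).} Fix $i\in\{1,\dots,p\}$ and a nonnegative integer $N$. I would first show, by induction on $N$, that for any two differential functions $F$ and $G$,
\begin{equation*}
G\,D_i^N F \;=\; F\,(-D_i)^N G \;+\; D_i\!\left[\sum_{j=0}^{N-1}\bigl(D_i^{N-1-j}F\bigr)\,\bigl((-D_i)^{j}G\bigr)\right].
\end{equation*}
The base case $N=0$ is immediate: the sum is empty and both sides equal $GF$. For the inductive step I would apply the Leibniz rule $D_i(AB)=(D_iA)B+A(D_iB)$ to the bracketed sum, use $D_i(-D_i)^jG=-(-D_i)^{j+1}G$, shift the index of the resulting second sum by one, and observe that all intermediate terms cancel telescopically, leaving exactly $G\,D_i^{N}F-F\,(-D_i)^{N}G$. (Equivalently, one may iterate the elementary relation $G\,D_iH=D_i(GH)-(D_iG)H$ starting from $H=D_i^{N-1}F$.)

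\emph{Stage two (iteration).} Writing $D_J=D_1^{J_1}\cdots D_p^{J_p}$, I would apply Stage one with $i=1$, $G=g$, and $F=D_2^{J_2}\cdots D_p^{J_p}f$; the divergence term produced is precisely $D_1\Phi^1_J[f,g]$, and the remaining term is $\bigl((-D_1)^{J_1}g\bigr)\,D_2^{J_2}\cdots D_p^{J_p}f$. Applying Stage one again with $i=2$, $G=(-D_1)^{J_1}g$, $F=D_3^{J_3}\cdots D_p^{J_p}f$ peels off $D_2\Phi^2_J[f,g]$ and leaves $\bigl((-D_2)^{J_2}(-D_1)^{J_1}g\bigr)\,D_3^{J_3}\cdots D_p^{J_p}f$. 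Continuing inductively, after processing variables $1,\dots,i$ one has
\begin{equation*}
g\,D_Jf \;=\; \bigl((-D_i)^{J_i}\cdots(-D_1)^{J_1}g\bigr)\,D_{i+1}^{J_{i+1}}\cdots D_p^{J_p}f \;+\; \sum_{k=1}^{i}D_k\,\Phi^k_J[f,g].
\end{equation*}
After the $p$-th step the non-divergence term is $\bigl((-D_p)^{J_p}\cdots(-D_1)^{J_1}g\bigr)f$; since total derivatives commute this equals $f\,(-D_1)^{J_1}\cdots(-D_p)^{J_p}g=f\,(-D)_Jg$, and collecting the accumulated divergences yields \eqref{ibp} with $\Phi^i_J$ as in \eqref{Phi}.

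\emph{Expected obstacle.} There is no analytic content here; the only care needed is bookkeeping. The fiddliest point is verifying, at the $i$-th iteration, that the boundary term output by Stage one coincides with the stated $\Phi^i_J[f,g]$: its $f$-factor must carry exactly $D_i^{J_i-1-j}D_{i+1}^{J_{i+1}}\cdots D_p^{J_p}$, while its $g$-factor must carry $(-D_i)^{j}$ composed with the operator $(-D_1)^{J_1}\cdots(-D_{i-1})^{J_{i-1}}$ that was accumulated onto $g$ during the previous $i-1$ steps. Making this matching transparent, together with the index shift in the telescoping sum of Stage one, is the one place where the computation must be written out carefully; everything else is routine.
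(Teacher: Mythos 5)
Your proposal is correct. Note that the paper itself states this lemma without proof (only the example $g\,D_x^2D_yf$ follows it), so there is no argument of the authors' to compare against; your two-stage argument supplies exactly what is missing. The single-variable identity $G\,D_i^NF=F\,(-D_i)^NG+D_i\bigl[\sum_{j=0}^{N-1}(D_i^{N-1-j}F)((-D_i)^jG)\bigr]$ checks out by the induction you describe (the $N=1$ case is the Leibniz rule, and the telescoping/index shift works), and the iteration over $i=1,\dots,p$ reproduces the stated $\Phi^i_J[f,g]$ precisely because formula \eqref{Phi} is organized in just that way: the $g$-factor carries the already-processed operators $(-D_1)^{J_1}\cdots(-D_{i-1})^{J_{i-1}}$ while the $f$-factor carries the not-yet-processed ones $D_i^{J_i-1-j}D_{i+1}^{J_{i+1}}\cdots D_p^{J_p}$; commutativity of total derivatives then converts the final non-divergence term into $f\,(-D)_Jg$. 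One cosmetic remark: the paper's \eqref{Phi} writes $D_n^{J_n}$ where $D_p^{J_p}$ is meant, and you have silently (and correctly) used $p$; your argument is consistent with the paper's example, which serves as a useful sanity check of the signs.
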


\begin{ex}
\begin{align*}
g\,D_x^2D_yf=(-1)^3f\,D_x^2D_yg+D_x\left(D_xD_yf\,g-D_yf\,D_xg\right)+D_y(f\,D_x^2g).
\end{align*}
\end{ex}

The identity \eqref{ibp} helps us directly apply the Euler operator \eqref{e} instead of integrating over space, as in Noether's approach. 
\begin{proof}[Proof of Theorem \ref{thm:triv}]

Using the integration by parts identity \eqref{ibp} in the RHS of \eqref{c:inf1}, we obtain:
\begin{align}\label{c:inf:1}
D_iM^i_f=D_Jf\,\xi^{aJ}\Delta_a=f(-D)_J\left[\xi^{aJ}\Delta_a\right]+D_i\Phi^i_J[f,\xi^{aJ}\Delta_a].
\end{align}
Applying now the Euler operator $E_f$ to \eqref{c:inf:1}, we obtain:
\begin{align}\label{id1}
0=(-D)_J\left[\xi^{aJ}\Delta_a\right].
\end{align}
Equation \eqref{id1} is a differential identity holding for all functions $u$.  It relates $\Delta$ to its derivatives and shows that the system $\Delta$ is under-determined.

Substituting \eqref{id1} into \eqref{c:inf:1} gives an equivalent representation of \eqref{c:inf1}:
\begin{align}\label{c:inf:2}
D_iM^i_f=D_i\Phi^i_J[f,\xi^{aJ}\Delta_a].
\end{align}
The right hand side is a total divergence.  Thus, the characteristic of this conservation law is zero, and the infinite set of conservation laws \eqref{c:inf1} is trivial.
\end{proof}
\begin{rem}
Any differential system that has conservation laws is necessarily quasi-Noether, according to Definition \ref{quasi-Noether}.  Therefore, Theorem \ref{thm:triv} is formulated for quasi-Noether systems.
\end{rem}
\medskip  
\begin{ex}
An example of Theorem \ref{thm:triv} is a known fact that (infinitesimal) local gauge transformations lead to trivial conservation laws (continuity equations).  Consider a scalar field $\varphi$ interacting with an electromagnetic field (scalar electrodynamics with zero potential) with the Lagrangian:
\begin{align}\label{l:sc}
\mathcal{L}=\mc{D}_\mu\bar\varphi\,\mc{D}^\mu\varphi-\frac{1}{4}F_{\mu\nu}F^{\mu\nu},
\end{align}
where $\bar\varphi$ is the complex conjugate of $\varphi$, $\mc D_\mu=\partial_\mu-iA_\mu$, and 
$F_{\mu\nu}=\partial_\mu A_\nu-\partial_\nu A_\mu$,   $\mu, \nu = 1, \ldots,4.$  

The Euler-Lagrange equations for \eqref{l:sc} are as follows:
\begin{align}\label{e:sc:1}
E_{A^\mu}\mathcal{L}&=i(\bar\varphi\,\pd_\mu\varphi-\varphi\pd_\mu\bar\varphi)+2A_\mu|\varphi|^2-\partial^\nu F_{\mu\nu}=0, \nonumber \\
E_{\bar\varphi}\mathcal{L}&=-\pd_\mu\pd^\mu\varphi+2iA^\mu\pd_\mu\varphi+i\varphi\pd_\mu A^\mu+2A_\mu A^\mu\varphi=0, \\
E_{\varphi}\mc L &=\overline{E_{\bar\varphi}\mc L}.     \nonumber
\end{align} 

Lagrangian \eqref{l:sc} is invariant with respect to the following infinite symmetry (local gauge) transformation:
\begin{align}\label{s:sc}
X_\theta=i\theta\varphi\pd_\varphi-i\theta\bar\varphi\pd_{\bar\varphi}+\pd^\mu\theta\pd_{A^\mu},
\end{align}
where $\theta=\theta(x)$ is an arbitrary function of all independent variables $x_\mu$, $X_\theta\mathcal{L}\equiv 0$.

Applying the Noether operator identity \eqref{nid} to the Lagrangian \eqref{l:sc}, we obtain the following infinite 
set of conservation laws \eqref{Rosenhaus:equation10}:
\begin{align}\label{c:sc}
\begin{split}
& \partial^\mu M_\mu\doteq 0, \\
& M_\mu =\theta\left[i\left(\bar\varphi\pd_\mu\varphi-\varphi\pd_\mu\bar\varphi\right)+2A_\mu|\varphi|^2\right]+\partial^\nu\theta F_{\mu\nu}.
\end{split}
\end{align}
Using equations \eqref{e:sc:1}, we can rewrite the fluxes $M_\mu$ in the following form:
\begin{align*}
M_\mu=\theta E_{A^\mu} \mathcal{L}+\partial^\nu(\theta F_{\mu\nu}).
\end{align*}
The first term vanishes on the solutions of equations \eqref{e:sc:1}, and, therefore, corresponds to a trivial conservation law of the first type. The second term corresponds to a trivial conservation law of the second type; its divergence vanishes identically since the tensor $F_{\mu\nu}$ is anti-symmetric. Therefore, according to the second part of Theorem \ref{thm:triv}, the infinite conservation law \eqref{c:sc} with an arbitrary function $\theta(x)$ of all independent variables $x_\mu$ is trivial. 

Consistent with the first part of Theorem \ref{thm:triv} (and the Second Noether Theorem), the following identity relationship \eqref{id1} between equations \eqref{e:sc:1} and their derivatives holds:
\begin{align}
i\varphi E_\varphi\mc L-i\bar\varphi E_{\bar\varphi}\mc L-\partial^\mu (E_{A^\mu}\mc L)\equiv 0.
\end{align} 
\end{ex}

Examples of infinite conservation laws involving arbitrary functions of all independent variables for non-Lagrangian systems are demonstrated in Section 6.
 
\section{Arbitrary functions of not all independent variables, and conservation laws}
In this section, we discuss the general case of infinite conservation laws that contain arbitrary functions of \emph{not all} independent variables. Theorem \ref{thm:triv} is valid for $f(x_1,...,x_p)$, or for arbitrary functions $f$ of \emph{all} independent variables.  It is not valid for arbitrary functions of \emph{not all} independent variables, e.g. $f(x_1)$. We present an extension of Theorem \ref{thm:triv} to this case.  

\begin{thm}\label{thm:div}
Suppose that the arbitrary function $f$ in Theorem \ref{thm:triv} depends only on the variables $(x_r,x_{r+1},...,x_p)$ for some $1< r\le p$.  Then there exists another conservation law of the system $\Delta:$ 
\eqref{id:2}, different from \eqref{c:inf1}, where the fluxes $M^r_f,M^{r+1}_f,...,M^p_f$ \eqref{malpha} are total divergences on-shell ($\Delta=0$).
\end{thm}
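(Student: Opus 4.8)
The plan is to imitate the proof of Theorem~\ref{thm:triv}, but to integrate by parts — equivalently, to apply the Euler operator — only in the variables $x_r,\dots,x_p$ on which $f$ genuinely depends, and to keep careful track of what survives in the directions $x_1,\dots,x_{r-1}$.

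Since $f=f(x_r,\dots,x_p)$ depends neither on $u$ nor on $x_1,\dots,x_{r-1}$, one has $D_jf=0$ for $1\le j<r$; hence in \eqref{malpha} only multi-indices $J$ with $J_1=\dots=J_{r-1}=0$ occur, and in \eqref{ibp}--\eqref{Phi} the fluxes $\Phi^i_J$ vanish for $i<r$. Applying \eqref{ibp} to the right-hand side of \eqref{c:inf1} therefore gives, off-shell,
\[
\sum_{i=1}^{r-1}D_iM^i_f+\sum_{i=r}^{p}D_i\bigl(M^i_f-\Phi^i_J[f,\xi^{aJ}\Delta_a]\bigr)=f\,(-D)_J\bigl[\xi^{aJ}\Delta_a\bigr].
\]
I would then apply the Euler operator $E_f$ (the operator \eqref{e} with $u$ replaced by $f$; since $f$ lives on $x_r,\dots,x_p$, it sums only over multi-indices in those variables). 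Three elementary facts do the work: $E_f$ annihilates $\sum_{i=r}^{p}D_iQ^i$ for any differential functions $Q^i$, since this is a total divergence in the variables carrying $f$; each $D_i$ with $i<r$ acts trivially on the $f$-jet variables, so $E_f\circ D_i=D_i\circ E_f$ and $E_f(D_iM^i_f)=D_iN^i$ with $N^i:=E_fM^i_f$ an $f$-free differential function; and $E_f(f\,\mc{I})=\mc{I}$ for the $f$-free quantity $\mc{I}:=(-D)_J[\xi^{aJ}\Delta_a]$. This produces the $f$-free divergence expression
\[
\sum_{i=1}^{r-1}D_iN^i=(-D)_J\bigl[\xi^{aJ}\Delta_a\bigr],
\]
which is itself a conservation law of $\Delta$ (the right-hand side is $\doteq 0$) and which reduces to the identity \eqref{id1} when $r=1$.

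Substituting this back and using $D_if=0$ for $i<r$ (so that $fD_iN^i=D_i(fN^i)$), I would gather everything into a single total divergence that vanishes identically off-shell:
\[
\sum_{i=1}^{r-1}D_i\bigl(M^i_f-fN^i\bigr)+\sum_{i=r}^{p}D_i\bigl(M^i_f-\Phi^i_J[f,\xi^{aJ}\Delta_a]\bigr)=0,
\]
i.e.\ a trivial conservation law of the second type. Now invoke the standard null-divergence lemma (\cite{Olver}): because $r>1$ we have $p\ge2$, so a divergence expression vanishing identically off-shell is the divergence of an antisymmetric object, and there exist differential functions $W^{ij}_f=-W^{ji}_f$ with $M^i_f-\Phi^i_J[f,\xi^{aJ}\Delta_a]=\sum_{j}D_jW^{ij}_f$ for $i=r,\dots,p$. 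Since $\Phi^i_J[f,\xi^{aJ}\Delta_a]$ is built from the $\Delta_a$ and hence $\doteq 0$, this gives
\[
M^i_f\ \doteq\ \sum_{j=1}^{p}D_jW^{ij}_f,\qquad i=r,\dots,p,
\]
so the fluxes in the $x_r,\dots,x_p$ directions are total divergences on solutions; rewriting \eqref{c:inf1} accordingly (together with the $f$-free conservation law found above) yields the asserted conservation law \eqref{id:2}.

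The step I expect to be the main obstacle is making the middle argument precise — verifying the operator identities $E_f\circ\bigl(\sum_{i=r}^{p}D_i\bigr)=0$ and $E_f\circ D_i=D_i\circ E_f$ for $i<r$, which together isolate the $f$-free fluxes $N^i$ — and then invoking the null-divergence lemma, which genuinely needs $p\ge2$. This is exactly where the hypothesis $r>1$ is used, and it explains why the conclusion here is weaker than in Theorem~\ref{thm:triv}: when $r=1$ the first divergence block is absent, $\mc{I}=0$, and one recovers both the differential identity \eqref{id1} and full triviality, whereas for $r>1$ the surviving block $\sum_{i<r}D_iN^i$ obstructs triviality and only forces the fluxes $M^r_f,\dots,M^p_f$ to be total divergences on-shell.
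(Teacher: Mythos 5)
Your proposal is correct and follows essentially the same route as the paper: integrate by parts only in the variables $x_r,\dots,x_p$, apply the Euler operator $E_f$ to extract the $f$-free conservation law \eqref{id:2}, then subtract $f$ times that law from the original conservation law and conclude that the on-shell fluxes $M^r_f,\dots,M^p_f$ are total divergences from the resulting null divergence. The only cosmetic difference is that you invoke the null-divergence lemma of \cite{Olver} on the full $p$-dimensional divergence, whereas the paper expresses the same conclusion by writing the ``general solution'' of the reduced identity \eqref{c:triv} modulo trivial fluxes in \eqref{sol:flux}; the substance is identical.
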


An application of the variational derivative gives rise, instead of to the differential identity \eqref{id1}, to \emph{another conservation law that does not involve $f$}. We also find that the conservation law \eqref{c:inf1} here is not trivial, and its \emph{fluxes are total divergences on solutions}.  In particular, a conservation law involving $f(t)$ has a density $M^t_f$ that is a spatial divergence on solutions.

\begin{proof}[Proof of Theorem \ref{thm:div}]
$\,$
\medskip

Part 1. According to \eqref{c:inf:1}:
\begin{align}\label{c:inf:3}
\sum_{i=1}^{r-1}D_iM^i_f+\sum_{i=r}^{p}D_iM^i_f=f(-D)_J[\xi^{aJ}\Delta_a]+\sum_{i=r}^p D_i\Phi^i_J[f,\xi^{aJ}\Delta_a],
\end{align}
where $J=(J_r,J_{r+1},...,J_p)$ is a multi-index. Applying the variational derivative $E_f$ \eqref{e} annihilates the second and fourth terms:
\begin{align}\label{id:1}
\sum_{i=1}^{r-1}E_fD_iM^i_f=(-D)_J[\xi^{aJ}\Delta_a].
\end{align}
Using \eqref{malpha} and the integration by parts identity \eqref{ibp}, we have
\begin{align}\label{mibp}
D_iM^i_f=D_JfD_iM^{iJ}=f\,(-D)_JD_iM^{iJ}+\sum_{j=r}^pD_j\Phi^j_J[f,D_iM^{iJ}],\quad 1\le i\le r-1.
\end{align}
Substituting \eqref{mibp} into \eqref{id:1}, we obtain:
\begin{align}\label{id:2}
\sum_{i=1}^{r-1}D_i\left[(-D)_JM^{iJ}\right]=(-D)_J[\xi^{aJ}\Delta_a].
\end{align}
Equation \eqref{id:2} is a conservation law distinct from \eqref{c:inf1}. It is nontrivial if $\xi^{a0}\neq 0$ for some $a$.
\smallskip

Part 2.  To prove the second part of the Theorem, we multiply \eqref{id:2} by $f$ and subtract it from \eqref{c:inf:3}, noting that $\Phi^j_J[f,D_iM^{iJ}]=D_i\Phi^j_J[f,M^{iJ}]$ in \eqref{mibp}:
\begin{align}\label{c:triv}
\sum_{i=r}^nD_i\sum_{j=1}^{r-1}D_j\Phi^i_J[f,M^{jJ}]+\sum_{i=p}^nD_iM^i_f=\sum_{i=p}^nD_i\Phi^i_J[f,\xi^{aJ}\Delta_a].
\end{align}
The general solution of this equation for the $M^i_f$ is as follows:
\begin{align}\label{sol:flux}
M^i_f=-\sum_{j=1}^{r-1}D_j\Phi^i_J[f,M^{jJ}]+\Phi^i_J[f,\xi^{aJ}\Delta_a],\quad r\le i\le n,
\end{align}
where it is understood that $M^i_f$ is arbitrary up to $M^i_f\to M^i_f +\lambda^i+\sum_{j=r}^nD_jR^{ij}_f$, where $\lambda^i$ is a constant, and each $R^{ij}_f=-R^{ji}_f$ is an arbitrary function (i.e. up to fluxes of trivial conservation laws).

From \eqref{ibp}, we see that each $\Phi^i_J[f,\xi^{aJ}\Delta]$ is a linear function of $\Delta$ and vanishes when $\Delta=0$.  Thus, on solutions $u$ of $\Delta=0$, each flux $M^i_f$ in \eqref{sol:flux} is a total divergence.
\end{proof}

\subsection{Examples of Theorem \ref{thm:div}}
Consider Liouville's equation for $u(t,x)$:
\begin{align}\label{e:liou}
\Delta=u_{tx}-e^u=0
\end{align}
with the Lagrangian:
\begin{align}\label{l:liou}
\mathcal{L}=-\frac{1}{2}u_tu_x-e^u.
\end{align}
This Lagrangian admits an infinite symmetry:
\begin{align}\label{s:liou}
X=[f'(t)+f(t)u_t]\partial_u=:\xi_f\pd_u
\end{align}
and an infinite set of corresponding conservation laws:
\begin{align}\label{c:liou}
\frac{1}{2}D_t\left[u_xf_t-2fe^u\right]+\frac{1}{2}D_x\left[(f_t+fu_t)u_t-f_{tt}u\right]=\xi_f E_u\mathcal{L}=\xi_f\Delta.
\end{align}
We write 
\begin{align*}
\xi_f\Delta=f(u_t\Delta-D_t\Delta)+D_t(f\Delta).
\end{align*}
Applying the Euler operator identity \eqref{id:1}, we obtain:
\begin{align}\label{kaput}
u_t\Delta-D_t\Delta= (-D)_J[\xi^{aJ}\Delta_a].
\end{align}
Since
\begin{align*}
\xi^{a}_f=\xi^{aJ}D_J f = u_t f(t) + f'(t),
\end{align*}
it follows that
\begin{align*}
\xi^{0}= u_t, \qquad \xi^{t}=1.
\end{align*}
Calculating the RHS of \eqref{kaput}, we obtain:
\begin{align}\label{kaputa}
(-D)_J[\xi^{aJ}\Delta_a]=u_t(u_{tx}-e^u) -D_t (u_{xt} - e^u) = u_t u_{xt} -u_{xtt} =D_x\left[\frac{1}{2}u_t^2-u_{tt}\right],
\end{align}
and 
\begin{align}\label{c:liou1}
u_t\Delta-D_t\Delta=D_x\left[\frac{1}{2}u_t^2-u_{tt}\right],
\end{align}
which can be easily verified.  This is the additional conservation law independent of $f$ predicted by Theorem \ref{thm:div}, since the left hand side is zero when $\Delta=0$.   In fact, this is a first integral of Liouville's equation.

We now demonstrate the second part of Theorem \ref{thm:div} and rewrite the density of \eqref{c:liou} as a total divergence on shell:
\begin{align*}
u_xf_t-2fe^u=u_xf_t+2f\Delta-2fu_{tx}= D_x(uf_t-2fu_t)+2f\Delta.
\end{align*}
Thus, when $\Delta=0$, we see that the density is a total derivative in $x$.
\medskip

Consider another example, \emph {the equation for non-stationary transonic gas flow}
\begin{align}  \label{transeq}
\Delta=2u_{xt}+u_xu_{xx}-u_{yy}
\end{align}
with the Lagrangian function
\begin{align}
L=-u_xu_t - \frac{u_x^3}{6} + \frac{u_y^2}{2}.
\end{align}
The conservation law in question is:
\begin{align}    \label{cont}	            
D_tM^t_f+D_xM^x_f+D_yM^y_f=\xi_f\Delta,
\end{align}
where 
\begin{align}     \label{alpha}
\xi_f=2xf'(t)+2y^2f''(t)-f(t)u_x,
\end{align}
and the density and fluxes are given by
\begin{align} \label{fluxes}
\begin{split}
M^t_f&=f(t)[-u_x^2]+f'(t)[2xu_x-2u]+f''(t)[2y^2u_x],\\
M^x_f&= - L f(t) - 2xuf''(t)-2y^2uf'''(t) + \xi_f(u_t+u_x^2/2), \\
M^y_f&= 4yuf''(t) - \xi_fu_y,
\end{split}
\end{align}
We find that the density can be rewritten as a total divergence on solutions of $\Delta=0$, in agreement with Theorem \ref{thm:div}:
\begin{align} \label{Mt}
M^t_f\equiv Q^t_{f\Delta}+D_xT^x_f+D_yT^y_f,
\end{align}
where
\begin{align}\label{tq}
\begin{split}
T^x_f&=f(t)[-4xu_t+4y^2u_{tt}+2y^2u_xu_{xt}-xu_x^2]+f'(t)[2xu-4y^2u_t-y^2u_x^2]+f''(t)[2y^2u],\\
T^y_f&=f(t)[2xu_y+4yu_t-2y^2u_{yt}]+f'(t)[2y^2u_y-4yu],\\
Q^t_{f\Delta}&=(2xf(t)+2y^2f'(t))\Delta-2y^2f(t)D_t\Delta.
\end{split}
\end{align}

Let us show how Theorem \ref{thm:div} can be used to generate conserved densities. Integrating the 
conservation law
\eqref{cont} over the whole space on shell, we will get
\begin{align} \label{intcont}                                
D_t\int M^t_f dx dy + D_x\int M^x_f dx dy + D_y\int M^y_f dx dy \: \doteq \; 0,
\end{align}
where $M^x_f, M^y_f$ are determined in \eqref{fluxes}, and $M^t_f$ is given by \eqref{Mt} and \eqref{tq}.
Equation \eqref{intcont} leads to a 
quantity conserved in time
\begin{align} \label{cons}                                
D_t\int M^t_f dx dy \: \doteq \; 0
\end{align}
if the contributions (surface terms) from the second and third 
integrals
vanish, i.e.
\begin{align}
\begin{split}
M^x {\Big|}_{x \to \pm \infty}\;\mathop\to \limits \: 0, \qquad
M^y {\Big|}_{y \to \pm \infty}\;\mathop\to \limits \: 0.
\end{split}
\end{align}
It is easy to see that, for sufficiently strict boundary conditions, all surface terms
vanish. But for these boundary conditions, all terms in $T^x$, and $T^y$ also vanish,
and we get no conserved densities of
the equation \eqref{transeq}.

However, for weaker boundary conditions, the situation changes. Consider a ``regular" asymptotic behavior 
\begin{align}
\begin{split} \label{regularbound}
u,u_i \,\, \mathop\to \limits_{x \to \pm \infty } \: & 0,\\
u,u_i \,\, \mathop\to \limits_{y \to \pm \infty } \: & 0.
\end{split}
\end{align}  \label{regularbc}
In this case, after eliminating vanishing terms, we obtain
\begin{align}
\begin{split}
M^x_f {\Big|}_{x \to \pm \infty}\;\mathop\to \limits \: & f'(t)2x(u_t+u_x^2/2) - 2xu f''(t), \\
M^y_f {\Big|}_{y \to \pm \infty}\;\mathop\to \limits \:& f''(t)(4yu - 2y^2u_y).
\end{split}
\end{align}   \label{MxMyregular}
Requiring these terms to vanish: 
\begin{align}
f(t)=a, \qquad a=\,\text{constant}.
\end{align}
$T^x_f$, and $T^y_f$ reduce to
\begin{align}
\begin{split}
T^x &= -4xu_t -xu_x^2 + 2y^2(2u_{tt} +u_xu_{xt}),  \\
T^y &= 4yu_t -2y^2u_{ty} +2x u_y.
\end{split}
\end{align}
Calculating $M^t_f$ for this case, taking into consideration equation \eqref{transeq}, and noting that the last term in $T^y$ vanishes when integrated over $y$, we obtain
\begin{align}
M^t_f \doteq T^x_{,x} +T^y_{,y} = - u_x^2-4xu_{xt}- 2xu_xu_{xx}.
\end {align}
The last two remaining terms in $M^t$ are:
\begin{align}
-2x(2u_{xt} + u_xu_{xx}) \doteq -2x u_{yy} = D_y(-2xu_y).
\end{align}
The contribution of this term in the integral over $y$ is, obviously, zero.
Finally, we have
\begin{align}
M^t_f \doteq - u_x^2.
\end{align}
Thus, for regular boundary conditions \eqref{regularbound}, we obtain the following conserved 
quantity
\begin{align}                                
D_t\int u_x^2 \, dx \, dy \doteq \, 0.
\end{align}
In \cite{Rosenhaus02}, this result was obtained using a different technique.

\section{Arbitrary functions of all independent variables in vorticity-type systems}
\subsection{General systems}
We consider a class of vorticity-type systems studied in \cite{Cheviakov} and \cite{Chevober}:
\begin{align}\label{e:ch}
\begin{split}
\vec\Delta&=\vec \omega_t+\nabla\times\vec M(t,x,\omega,...)=0,\quad\text{or}\quad\Delta_i=\omega^i_t+\epsilon^{ijk}D_jM^k=0,\quad i=1,2,3,\\
\Delta_4&=\nabla\cdot\vec \omega=0,\quad\text{or}\quad\Delta_4=\omega^i_i=0.
\end{split}
\end{align}
Here, $\omega$ is the vorticity, and $\vec M$ is some vector function.

The following infinite set of conservation laws of \eqref{e:ch} involving an arbitrary function $F(t,x)$ was presented in \cite{Cheviakov} and \cite{Chevober}:
\begin{align}\label{c:ch}
\pd_t(\vec\omega\cdot\nabla F)+\nabla\cdot\left(\vec M\times\nabla F-\pd_tF\,\vec\omega\right)=\nabla F\cdot\vec\Delta-\pd_tF\,\Delta_4=0.
\end{align}

We show that the conservation laws \eqref{c:ch} are trivial.  Indeed, applying the procedure of Theorem \ref{thm:triv} to \eqref{c:ch} (i.e. taking the variational derivative $E_F$ of \eqref{c:ch}) leads us to the following off-shell identity:
\begin{align}\label{id:ch}
D_t\Delta_4-\nabla\cdot\vec\Delta=0.
\end{align}
Substituting \eqref{id:ch} into \eqref{c:ch} yields:
\begin{align}\label{c:ch:1}
\pd_t(\vec\omega\cdot\nabla F)+\nabla\cdot\left(\vec M\times\nabla F-\pd_tF\,\vec\omega\right)=\nabla\cdot(F\vec\Delta)-D_t(F\Delta_4).
\end{align}
Since the characteristic (coefficient of $\Delta$) of this conservation law is zero, the conservation law \eqref{c:ch} is trivial.  It is a combination of two kinds of triviality: 
the divergence terms on the RHS have fluxes that vanish on solutions (on shell, $\Delta=0$),
and, therefore, do not contribute to the conserved densities or integrals. The rest is a differential identity that holds everywhere in the extended space (off-shell). 

The relation (\ref{id:ch}) expresses an interdependence of equations in the governing differential system $\Delta$, or the fact that the system is under-determined (abnormal, see \cite{Olver}).  The relation \eqref{id:ch} was noted in \cite{Chevober} in the case of the Navier-Stokes equations, but not the fact that the infinite conservation laws \eqref{c:ch} are trivial.

\subsection{Symmetries of the Euler vorticity system}
We now apply Theorem \ref{thm:2} to the vorticity system \eqref{e:ch} for velocity $u$ and vorticity $\omega$:
\begin{align}
\begin{split}\label{e:eu}
\Delta_i&=\omega^i_t+\partial_j(\omega^iu^j-\omega^ju^i)=0,\qquad i=1,2,3,\\
\Delta_4&=\omega^i_i=0,\\
\Delta_5&=u^i_i=0.
\end{split}
\end{align}
This system is clearly quasi-Noether, and we can set alternative Lagrangians as $\mc A=\Delta_i$ for each $i=1,2,3$.  It admits several infinite sets of symmetries.  If functions $g^k(t,x)$ are arbitrary, and $f^i=\epsilon^{ijk}\partial_jg^k$ ($\partial_i f^i=0$), then the infinite symmetries are given by:
\begin{align}\label{s:eu}
\begin{split}
X_f&=\,\alpha_f^i\pd_{\omega^i}+\beta_f^i\pd_{u^i},\\
\alpha^i_f&=\omega^j\pd_jf^i-f^j\pd_j\omega^i,\\
\beta^i_f&=\pd_tf^i+u^jf\pd_jf^i-f^j\pd_ju^i,
\end{split}
\end{align}
(the symmetries \eqref{s:eu} were obtained with the use of the program \emph{MathLie}$^{\text{TM}}$ in \cite{Baumann}). Applying $X_f$ to the alternative Lagrangians $\mc A=\Delta_i,\quad1\le i\le 3$ and using the Noether identity \eqref{nid}, these infinite symmetries lead to the following infinite sets of conservation laws:
\begin{align}\label{c:eu}
\begin{split}
D_t\,\alpha^i_f+D_j\left(\alpha^i_fu^j-\alpha^j_fu^i+\omega^i\beta^j_f-\omega^j\beta^i_f\right)=\pd_jf^i\Delta_i-\pd_tf^i\Delta_4-D_j(f^j\Delta_i).
\end{split}
\end{align}
If we apply now the procedure of Theorem \ref{thm:2} to the conservation laws \eqref{c:eu}, using the fact that functions $g$ are arbitrary functions of all independent variables, we obtain the following (off-shell) differential identities:
\begin{align}\label{id:eu}
D_i\left(D_t\Delta_4-D_j\Delta_j\right)=0,\quad 1\le i\le 3.
\end{align}
Examining the interior expression, and the fact that $D_j\Delta_j=\pd_j\pd_t\omega^j=\pd_t\pd_j\omega^j=D_t\Delta_4$, we find a stronger off-shell identity:
\begin{align}\label{id:eu:1}
D_t\Delta_4-D_j\Delta_j=0.
\end{align}
Thus, the existence of infinite symmetries \eqref{s:eu} with arbitrary functions of all independent variables is an indication that the vorticity system \eqref{e:eu} is under-determined.

\subsection{Symmetries and Noether's second theorem}

Let us try to relate the differential identity \eqref{id:eu:1} (and trivial conservation laws \eqref{c:ch}) to an infinite variational symmetry according to Theorem \ref{thm:noether2}.  

As shown in \cite{Olver}, any differential system can be given a variational formulation by introducing additional unknown variables and equations.  If we start with a differential system $\Delta_a, \:\:  1\le a\le n$ for variables $u^q, \:\:1\le q\le m$, we can introduce additional ``adjoint variables" $v^a, \:\: 1\le a\le n$ and define the following Lagrangian:
\begin{align}\label{l:ol}
\mc L(u,v)=v^a\Delta_a(u).
\end{align}
Indeed, applying the Euler operator $E_{v^a}=\partial_{v^a}+\dots$ to $\mc L$ recovers $\Delta_a$.  The adjoint variables $v^a$ will satisfy their own differential systems, but these functions are, in general, non-physical.

We modify the approach of \cite{Olver} and construct a variational formulation for \eqref{e:ch}.  Letting $\vec\omega=\nabla\times u$ for ``velocity potential" $\vec u$, we rewrite \eqref{e:ch} equivalently as a higher order system for $\vec u$:
\begin{align}\label{e:ch:1}
\vec\Delta=\partial_t(\nabla\times\vec u)+\nabla\times\vec M(t,x,\nabla\times\vec u,...)=0.
\end{align}
We define the following Lagrangian:
\begin{align}\label{l:ch}
\mc L(u,v)=-\vec u\cdot\pd_t\,(\nabla\times\vec v)+\vec M\cdot\nabla\times\vec v.
\end{align}
This Lagrangian recovers \eqref{e:ch:1}, since 
\begin{align*}
\mc L-\vec v\cdot\vec\Delta=-\pd_t\,(\vec v\cdot\nabla\times\vec u)+\nabla\cdot(\vec v\times\vec M-\pd_t\,\vec v\times\vec u),
\end{align*}
which means that $E_{v^a}\mc L=\Delta_a$. 

Now, the Lagrangian \eqref{l:ch} admits the following infinite set of (gauge) symmetries:
\begin{align}\label{s:ch}
\vec u\to \vec u+\nabla F,
\end{align}
where $F(t,x)$ is an arbitrary function of all independent variables.

Applying the Noether identity \eqref{nid}, we associate to this symmetry the following set of conservation laws involving an arbitrary function $F$:
\begin{align}\label{c:ch:2}
\pd_t\,\left(\nabla F\cdot\nabla\times\vec u\right)+\nabla\cdot\left(\vec M\times \nabla F+\nabla(\pd_t\,F)\times\vec u\right)=\nabla F\cdot\vec\Delta.
\end{align}
Rewriting the last term in the LHS of \eqref{c:ch:2}:
\begin{align*}
\nabla(\pd_t\,F)\times\vec u=\nabla\times(\pd_t\,F\,\vec u)-\pd_t\,F\,\nabla\times\vec u,
\end{align*}
and substituting it into \eqref{c:ch:2}:
\begin{align}\label{c:ch:3}
\pd_t\,\left(\nabla F\cdot\nabla\times\vec u\right)+\nabla\cdot\left(\vec M\times \nabla F-\pd_t\,F\,\nabla\times\vec u\right)=\nabla F\cdot\vec\Delta,
\end{align}
for $\vec\omega=\nabla\times\vec u$ recovers the infinite conservation law \eqref{c:ch:1} proposed in \cite{Cheviakov}.

Finally, performing on \eqref{c:ch:3} the integration procedure of Noether's second theorem (see e.g. Theorem \ref{thm:2}) yields the following off-shell identity:
\begin{align}\label{id:ch:1}
\nabla\cdot\vec\Delta=0,
\end{align}
which verifies the fact that the vorticity system \eqref{e:ch:1} is a total curl.  This is unsurprising from a fluid dynamics perspective, since the vorticity system is obtained precisely by taking the curl of the Navier-Stokes system.  We see that the infinite conservation laws \eqref{c:ch} are expressions of this structural fact.

\subsection{Navier-Stokes equations}
In \cite{Cheviakov} were considered special cases of \eqref{c:ch} for physical systems including Maxwell's equations, the Navier-Stokes equations, and the equations of magnetohydrodynamics.  In \cite{Chevober} were considered infinite conservation laws \eqref{c:ch} in the case of the Euler and Navier-Stokes equations of incompressible fluid dynamics; many more results were subsequently derived using these conservation laws.  For Navier-Stokes, the system \eqref{e:ch} for $\omega$ takes the following form:
\begin{align}\label{e:ns}
\begin{split}
\vec\Delta&=\vec \omega_t+\nabla\times\left(\vec\omega\times\vec u-\nu\nabla^2\vec u\right)=0,\\
\Delta_4&=\nabla\cdot\vec \omega=0.
\end{split}
\end{align}
Here, $\vec u$ is the velocity vector, and $\nu$ is the viscosity. The system \eqref{e:ns} is the system \eqref{e:ch} for $\vec M=\vec\omega\times\vec u-\nu\nabla^2\vec u$; it comprises the \emph{vorticity equations} of incompressible flow.  Equation $\Delta_4$ expresses the fact that $\vec\omega=\nabla\times\vec u$.

The infinite conservation laws \eqref{c:ch} considered in these papers take the following form:
\begin{align}\label{c:ns}
\begin{split}
\pd_t(\vec\omega\cdot\nabla F)+\nabla\cdot\left[(\vec\omega\times\vec u-\nu\nabla^2\vec u)\times\nabla F- (\pd_tF)\,\vec\omega\right]&=\nabla F\cdot\vec\Delta-(\pd_tF)\,\Delta_4\\
&=\nabla\cdot(F\vec\Delta)-D_t(F\Delta_4).
\end{split}
\end{align}
As we have shown, these conservation laws are trivial: 
the RHS is a divergence expression with 
fluxes
that vanish on-shell (triviality of the first kind) and, therefore, do not contribute to the conserved densities or integrals.
The rest is a differential identity that holds everywhere in the extended space (off-shell; triviality of the second kind). 

Let us discuss the conserved charge of \eqref{c:ns}, namely $Q[\vec\omega]:=\int\vec\omega\cdot\nabla F\di^3x$, which was proposed in \cite{Cheviakov}. If we integrate \eqref{c:ns} over space and assume that $F$ and $\vec\omega$ are such that the surface terms of the fluxes (arguments of $\nabla\cdot()$) vanish at the boundary (infinity), then we obtain the following relation for $Q$:
\begin{align}\label{Q}
D_tQ[\vec\omega]=-D_t\int F\Delta_4\di^3x, 
\end{align}
meaning that $Q$ is conserved on-shell:
\begin{align}\label{Q}
D_tQ[\vec\omega]\doteq 0. 
\end{align}

Let us now construct an off-shell identity by subtracting the terms on the right hand side of \eqref{c:ns} from both sides:
\begin{align}\label{c:ns:1}
\pd_t(\vec\omega\cdot\nabla F)+D_t(F\Delta_4)+\nabla\cdot\left[(\vec\omega\times\vec u-\nu\nabla^2\vec u)\times\nabla F-\pd_tF\,\vec\omega\right]-\nabla\cdot(F\vec\Delta)\:=\:0,
\end{align}
or
\begin{align}\label{c:ns:2}
\pd_t\,[\nabla\cdot(F\,\vec\omega)]+\nabla\cdot\left[\nabla\times\left(F(\vec\omega\times\vec u-\nu\nabla^2\vec u)\right)-\partial_t(F\,\vec\omega)\right]=0.
\end{align}
The equation \eqref{c:ns:2} is an equivalent statement of \eqref{c:ns}. Let $R[\vec\omega]:=\int\nabla\cdot(F\,\vec\omega)\di^3x$ be the conserved charge of this formulation.  Integrating \eqref{c:ns:2} over space and assuming that all surface terms vanish at the boundary (infinity), we see that $R$ satisfies the following off-shell integral identity:
\begin{align*}
D_tR[\vec\omega]=0,
\end{align*}
meaning that $R$ is conserved off-shell.  In general, when $\vec\omega$ does not satisfy \eqref{e:ns}, $Q\neq R$.
However, for those $\vec\omega$ that satisfy $\Delta[\vec\omega]=0$, assuming the stricter of two boundary conditions, we can demonstrate that $Q=R$ on-shell.  Indeed,
\begin{align*}
Q= \int\vec\omega\cdot\nabla F\di^3x \doteq \int\left(\vec\omega\cdot\nabla F+F\,\nabla\cdot\vec\omega\right)\di^3x =\int\nabla\cdot(F\vec\omega)\di^3x=R.
\end{align*}
If we call $R$ the ``trivial charge", the charge that is conserved for every $\vec\omega$, we see that, although $Q$ is a nontrivial charge off-shell, it is equal to the trivial charge $R$ on-shell.  This is a contrast to conventional conserved quantities for well-defined systems, which are equal to trivial charges neither on- nor off-shell.  

Further study is needed to understand the role of trivial conservation laws
and the nature of conserved charges for under-determined systems; see also \cite{Mintchev86}, \cite{Olver}, \cite{Olver83}.

\subsection{Ertel's Theorem}
Infinite trivial conservation laws \eqref{c:ns} also play a role in inviscid fluid dynamics, $\nu=0$.  They take the form of \emph{Ertel's theorem}, an important tool in atmospheric sciences (see e.g. \cite{Muller}).  The Euler vorticity equations for $\omega$ are as follows:
\begin{align}\label{e:ev}
\begin{split}
\vec\Delta&=\vec \omega_t+\nabla\times\left(\vec\omega\times\vec u\right)=0,\\
\Delta_4&=\nabla\cdot\vec \omega=0,
\end{split}
\end{align}
which is the system \eqref{e:ch} for $\vec M=\vec\omega\times\vec u$, where, as earlier, $\vec u$ is the velocity vector.

Ertel's theorem \cite{Webb2015} states that the following relationship holds on solutions of \eqref{e:ev}:
\begin{align}\label{e:er}
\partial_t(\vec\omega\cdot\nabla\psi)+\nabla\cdot[(\vec\omega\cdot\nabla\psi)\vec u]=0.
\end{align}
Here, $\psi(t,x)$ is a scalar function advected with the flow of $\vec u$.  This means that it solves the following differential equation:
\begin{align}\label{psi}
\partial_t\psi+\vec u\cdot\nabla\psi=0.
\end{align}

In fact, Ertel's theorem \eqref{e:er} follows directly from trivial conservation law \eqref{c:ns} for $\nu=0$.  To see this, we start with \eqref{c:ns} for $\nu=0$ and $F=\psi$:
\begin{align}\label{c:er}
\partial_t(\vec\omega\cdot\nabla\psi)+\nabla\cdot[(\vec\omega\times\vec u)\times\nabla\psi-\partial_t\psi\,\vec\omega]=\nabla\cdot(\psi\vec\Delta)-D_t(\psi\,\Delta_4).
\end{align}
Since $(\vec\omega\times\vec u)\times\nabla\psi=(\vec\omega\cdot\nabla\psi)\vec u-(\vec u\cdot\nabla\psi)\vec\omega$, equation \eqref{c:er} becomes:
\begin{align*}
\partial_t(\vec\omega\cdot\nabla\psi)+\nabla\cdot[(\vec\omega\cdot\nabla\psi)\vec u-(\pd_t\psi+\vec u\cdot\nabla\psi)\,\vec\omega]=\nabla\cdot(\psi\vec\Delta)-D_t(\psi\,\Delta_4).
\end{align*}
But since $\psi$ satisfies \eqref{psi}, Ertel's theorem \eqref{e:er} follows:
\begin{align*}
\partial_t(\vec\omega\cdot\nabla\psi)+\nabla\cdot[(\vec\omega\cdot\nabla\psi)\vec u]=\nabla\cdot(\psi\vec\Delta)-D_t(\psi\,\Delta_4),
\end{align*}
which is a trivial conservation law, since its characteristic is zero.

\section{Conclusion}
We considered quasi-Noether systems, a class of differential system that includes all equations possessing conservation laws. We extended an approach based on the Noether operator identity and formulated and proved an extension of a Second Noether theorem for quasi-Noether systems. As in the case of equations of a variational problem, the existence of an infinite symmetry group of the system with an arbitrary function of \emph{all} independent variables generates a differential identity between the equations of the system and their derivatives.   
In addition, we showed that 
infinite conservation laws 
involving
an arbitrary function of \emph{all} independent variables are necessarily trivial. We analyzed recently obtained sets of infinite conservation laws (with an arbitrary function of all independent variables) for the Euler and Navier-Stokes equations in vorticity formulations and demonstrated that these conservation laws are trivial.
We also showed that the existence of infinite conservation laws with an arbitrary function of \emph{not all} independent variables leads to the conclusion that these conservation laws are, in general, non-trivial, but that their fluxes are total divergences on solutions.

\end{document}